\definecolor{gray}{rgb}{0.1,0.1,.1}
\theoremstyle{plain}%
\newtheorem{theorem}{Theorem}[]%
\newtheorem{corollary}[theorem]{Corollary}%
\newtheorem{lemma}[theorem]{Lemma}%
\newtheorem{proposition}[theorem]{Proposition}%
\newtheorem{assumption}[theorem]{Assumption}%
\newtheorem{remark}[theorem]{Remark}%
\newcommand{\figdraft}{false}%
\newcommand{\figfile}[1]{#1}%
\definecolor{colorGreen}{rgb}{0.,0.8,0}
\definecolor{colorRed}{rgb}{0.8,0.,0}
\definecolor{colorBlue}{rgb}{0.,0.,0.8}
\newenvironment{mhchange}{%
}{
}%
\DeclareMathOperator{\xspan}{\mathrm{span}}
\DeclareMathOperator{\supp}{\mathrm{supp}}
\newcommand{\iu}{\mathtt{i}}
\newcommand{\mhexp}[1]{{{\mathtt{e}}^{#1}}}
\newcommand{\sgn}{\mathrm{sgn}}
\newcommand{\fspace}[1]{{\mathsf{#1}}}
\newcommand{\fspaceL}{\fspace{L}}
\newcommand{\fspaceC}{\fspace{C}}
\newcommand{\fspaceW}{\fspace{W}}
\newcommand{\Rset}{{\mathbb{R}}}
\newcommand{\oointerval}[2]{(#1,\,#2)}%
\newcommand{\ccinterval}[2]{[#1,\,#2]}%
\newcommand{\DO}[1]{{O\at{#1}}}
\newcommand{\nDO}[1]{{O\nat{#1}}}
\newlength{\mhpicDwidth}
\newlength{\mhpicDvsep}
\newlength{\mhpicDhsep}
\newlength{\mhpicPwidth}
\newlength{\mhpicPvsep}
\newlength{\mhpicPhsep}
\newlength{\mhpicWhsep}
\newcommand{\pair}[2]{{\left({#1},\,{#2}\right)}}
\newcommand{\npair}[2]{{({#1},\,{#2})}}
\newcommand{\at}[1]{{\left({#1}\right)}}
\newcommand{\nat}[1]{(#1)}
\newcommand{\bat}[1]{{\big(#1\big)}}
\newcommand{\Bat}[1]{{\Big(#1\Big)}}
\newcommand{\bigpar}{\par\quad\newline\noindent}
\newcommand{\norm}[1]{\|{#1}\|}
\newcommand{\abs}[1]{\left|{#1}\right|}
\newcommand{\babs}[1]{\big|{#1}\big|}
\newcommand{\nabs}[1]{|{#1}|}
\newcommand{\dint}[1]{\,\mathrm{d}#1}
\newcommand{\Om}{{\Omega}}
\newcommand{\al}{{\alpha}}
\newcommand{\be}{{\beta}}
\newcommand{\ga}{{\gamma}}
\newcommand{\la}{{\lambda}}
\newcommand{\calA}{\mathcal{A}}
\newcommand{\calF}{\mathcal{F}}
\newcommand{\calG}{\mathcal{G}}
\newcommand{\calL}{\mathcal{L}}
\newcommand{\calM}{\mathcal{M}}
\newcommand{\calP}{\mathcal{P}}
\newcommand{\calS}{\mathcal{S}}
\newcommand{\calT}{\mathcal{T}}
\def\DPsi{{C_\Psi}}
\begin{document}%
%
%
\title{Subsonic phase transition waves \\
in bistable lattice models with small spinodal region}%
\date{\today}%
\author{ %
Michael Herrmann\footnote{
{\tt{michael.herrmann@math.uni-sb.de}}, Universit\"at des Saarlandes
}
,\;\;\;%
Karsten Matthies\footnote{
{\tt{k.matthies@maths.bath.ac.uk}}, University of Bath
}
,\;\;\;%
Hartmut Schwetlick\footnote{
{\tt{h.schwetlick@maths.bath.ac.uk}}, University of Bath
}
,\;\;\;%
Johannes Zimmer\footnote{
{\tt{zimmer@maths.bath.ac.uk}}, University of Bath
}
}
\maketitle
%
%
%
\begin{abstract}%
Phase transitions waves in atomic chains with double-well potential play a fundamental role in materials science,
but very little is known about their mathematical properties. In particular, the only available results about waves
with large amplitudes concern chains with piecewise-quadratic pair potential. In this paper we consider
perturbations of a bi-quadratic potential and prove that the corresponding three-parameter family of waves
persists as long as the perturbation is small and localised with respect to the strain variable.  As a standard 
Lyapunov-Schmidt reduction cannot be used due to the presence of an essential spectrum, 
we characterise the perturbation of the wave as a fixed point of a nonlinear and nonlocal operator and show that this
operator is contractive in a small ball in a suitable function space. Moreover, we derive a uniqueness result for
phase transition waves with certain properties and discuss the kinetic relation.
\end{abstract}%
%
%
\quad\newline\noindent%
\begin{minipage}[t]{0.15\textwidth}%
Keywords: %
\end{minipage}%
\begin{minipage}[t]{0.8\textwidth}%
\emph{phase transitions in lattices}, \emph{kinetic relations},\\
\emph{heteroclinic travelling waves in Fermi-Pasta-Ulam chains}%
\end{minipage}%
\medskip
\newline\noindent
\begin{minipage}[t]{0.15\textwidth}%
MSC (2010): %
\end{minipage}%
37K60, 
47H10,  
74J30,  
82C26   
\begin{minipage}[t]{0.8\textwidth}%
\end{minipage}%

%
%
%
%
%
\section{Introduction}
%

Many standard models in one-dimensional discrete elasticity describe the motion in atomic chains with
nearest neighbour interactions. The corresponding equation of motion reads
\begin{align}
\label{eq:FPUintro}
\ddot u_j (t) = \Phi'(u_{j+1}(t) -u_{j}(t) )-\Phi'(u_{j}(t)-u_{j-1}(t))\,,
\end{align}
where $\Phi$ is the interaction potential and $u_j$ denotes the displacement of particle $j$ at time $t$.
\par
Of particular importance is the case of non-convex $\Phi$, because then \eqref{eq:FPUintro}
provides a simple dynamical model for martensitic phase transitions.  In this context,
a propagating interface can be described by a \emph{phase transition wave}, which is a travelling wave that moves with subsonic speed and is heteroclinic as it connects periodic oscillations in different wells of $\Phi$. The interest
in such waves is also motivated by the quest to derive
selection criteria for the na\"ive continuum limit of~\eqref{eq:FPUintro}, which is the PDE
$\partial_{tt}u=\partial_x\Phi^\prime\at{\partial_x}$. For non-convex $\Phi$, this equation is ill-posed
due to its elliptic-hyperbolic nature, and one proposal is to select solutions by so-called kinetic relations~\cite{Abeyaratne:91a,Truskinovski:87a} derived from travelling waves in atomistic models.
\par
Combining the travelling wave ansatz $u_j\at{t}=U\at{j-ct}$ with \eqref{eq:FPUintro} yields
the delay-advance-differential equation
\begin{align}
\label{eq:strain}
c^2 R'' (x) = \Delta_1 \Phi' (R(x)) ,
\end{align}
where $R(x) := U(x+1/2) - U(x-1/2)$ the (symmetrised) discrete strain profile and
$\Delta_1 F(x) := F (x + 1) - 2F(x) + F(x - 1)$. Periodic and homoclinic travelling waves have
been studied intensively, see
\cite{Friesecke:94a,Smets:97a,Friesecke:99a, Pankov:05a,EP05,Her10a} and the references therein,
but very little is known about heteroclinic waves. The authors are only aware of
\cite{HR10b,Herrmann:11b}, which prove the existence of supersonic heteroclinic waves,
and the small amplitude results from \cite{Iooss:00b}. In particular, there seems to be no result
that provides phase transitions waves with large amplitudes for generic double-well potentials.
\par
Phase transition waves with large amplitudes are only well understood
for piecewise quadratic potentials, and there exists a rich body of literature on bi-quadratic potentials, starting with
\cite{Balk:01a,Balk:01b,TV05}.  For the special case
\begin{align}
\label{Intro:Eqn.0}
\Phi\at{r}=\tfrac{1}{2}r^2-\abs{r}\,,\qquad \Phi^\prime\at{r}=r-\sgn\at{r}
\end{align}
the existence of phase transition waves has been established by two of the authors using rigorous Fourier methods. In \cite{Schwetlick:07a} they consider subsonic speeds $c$ sufficiently close to $1$, which is the speed of sound,
and show that \eqref{eq:strain} admits for each $c$ a two-parameter family of waves. These waves have
exactly one interface and connect different periodic tail oscillations, see Figure \ref{Fig:waves} for an illustration.
\par
\begin{mhchange}
In this paper we allow for small perturbations of the potential \eqref{Intro:Eqn.0} and show that the three-parameter family of phase transition waves from \cite{Schwetlick:08a} persist provided that the perturbation is sufficiently localised
with respect to the strain variable $r$.
\par
A related problem has been studied in \cite{Vainchtein:09a}. There, a piecewise quadratic family of potentials is considered such that the stress-strain relationship is continuous and trilinear, with a small spinodal region. Travelling wave solutions are shown to obey a relation of residuals in the Fourier representation, which is then approximately solved numerically. The regularity of the perturbed potential is lower than that of the class of perturbations considered here, so strictly speaking the results do not overlap. However, in spirit the settings are close and indeed the numerical evidence~\cite[Fig.~4, bottom right panel]{Vainchtein:09a} is in good in agreement with our findings: there is an one-sided asymptotically constant solution, and the tail behind the interface oscillates with slightly different amplitude from that related to~\eqref{Intro:Eqn.0}. The range of velocities considered in~\cite{Vainchtein:09a} is larger than the one studied here. 
\par
Our approach is in essence perturbative and reformulates the travelling wave equation with perturbed potential
in terms of a corrector profile $S$, i.e., we write $R=R_0+S$, where $R_0$ is a given wave
corresponding to
the unperturbed potential. The resulting equation for the corrector $S$ can be written as
\begin{align}
\label{Intro:Eqn.1}
\calM {S}=\calA^2 \calG\at{S}+\eta\,,
\end{align}
where $\eta$ is a constant of integration and $\calA$, $\calM$, $\calG$
are operators to be identified below. More precisely, $\calM$ is a linear integral operator which depends on $c$
and $\calG$ a nonlinear superposition operator involving $R_0$. The analysis of \eqref{Intro:Eqn.1} is rather delicate since the Fourier symbol of $\calM$ has real roots, which implies that $0$ is an inner point of
the continuous spectrum of $\calM$.  In particular  $\calM$ is not a Fredholm operator in the function spaces we considered here, so a standard bifurcation analysis from $\delta=0$ via a Lyapunov-Schmidt reduction is impossible. 
\par
In our existence proof,
we first eliminate the corresponding singularities and derive an appropriate solution formula for the
linear subproblem. Afterwards we introduce a class
of admissible functions $S$ and show that the properties of $\calA$ and $\calG$ guarantee that $\calA^2\calG\at{S}$ is compactly supported and sufficiently small. These fine properties are illustrated in Fig.~\ref{Fig:g_fct} and allow us to define
a nonlocal and nonlinear operator $\calT$ such that
\begin{align*}
\calM\calT\at{S}=\calA^2\calG\at{S}+\eta(S)
\end{align*}
holds for all admissible $S$ with some $\eta\at{S}\in\Rset$. This operator
$\calT$ is contractive in some ball of an appropriately defined function space,
so the existence of phase transitions waves is granted by the contraction mapping
principle, see Lemma \ref{Lem:SelfMapping}. Moreover, the properties
of $\calM$ and $\calG$ imply that our fixed point method for $S$ yields all phase transition waves
$R$ that comply with certain requirements, see Proposition \ref{Thm:Uniqueness}.
\end{mhchange}
\par
\begin{mhchange}
Our existence result yields -- for each $c$ from an interval of subsonic velocities -- a genuine two-parameter family of solutions to \eqref{eq:strain} but it is not clear whether all these phase transition waves are physically reasonable.
In the literature, one often employs
selection criteria to single out a unique phase transition wave for each speed $c$. One selection criterion is the \emph{causality principle},
which in our case selects waves with non-oscillatory tails in front of the interface;  see
\cite{Sle01,Sle02,TV05} and Remark \ref{MainResult.CommentCausality} following the Main Theorem~\ref{Thm:Main}.
These waves can also be observed in numerical simulations of atomistic Riemann problems with non-oscillatory initial data~\cite{Herrmann:10b}.
\par
Below we tailor our perturbation method carefully in order to show persistence of the  
tail oscillations in front of the interface. In particular, for each small $\delta$ and any given $c$ 
we obtain exactly one wave that propagates towards an asymptotically constant state. The
other solutions are oscillatory for both $x\to-\infty$ and $x\to+\infty$,
and satisfy the \emph{entropy principle} -- which is less restrictive than the causality principle -- 
as long as the oscillations in front of the interface have smaller amplitude than those behind;
see \cite{Herrmann:10b} for more details and a discussion of the different versions of Sommerfeld's \emph{radiation condition}. 
It is not known 
whether waves with tail oscillations on both sides of the interface are dynamically stable
or can be created by Riemann initial data. As usual, however, one might expect that travelling waves 
resemble dynamical solutions of more complex situations in a temporal and spatial window. Candidates are
cascades of moving phase interfaces in chains with multi-well potential or other types of macroscopically self-similar waves.
\end{mhchange}
\par
We also emphasise that phase transition waves satisfy Rankine-Hugoniot conditions for
the macroscopic averages of mass, momentum, and total energy \cite{Herrmann:10b},
which imply nontrivial restrictions between the wave speed and the tail oscillations on both sides of the interface.
Although these conditions do not appear explicitly in our existence proof, they can (at least in principle) be computed
because the tail oscillations are given by harmonic waves, see
again Fig.~\ref{Fig:waves}. For general double-well potentials, however,
it is much harder to evaluate the Rankine-Hugoniot conditions and thus it remains unclear which tail oscillations can be connected by phase transition waves. Closely related to the jump condition for the total energy is the \emph{kinetic relation}, which specifies the transfer between oscillatory and non-oscillatory energy at the interface and determines
the \emph{configurational force} that drives the wave. In the final section
we discuss how the kinetic relation changes to leading order under small perturbations of the potential \eqref{Intro:Eqn.0}.
%
%
\bigpar
We now present our main result in greater detail.
%

%
%
\subsection{Overview and main result}
%
%
We study an atomic chain with interaction potential
\begin{align*}
\Phi_\delta\at{r}=\tfrac12 r^2-\Psi_\delta\at{r}\,,\qquad \Psi_\delta\at{0}=0\,,
\end{align*}
where $\Psi^\prime_\delta$ is a perturbation of $\Psi_0^\prime=\sgn$ in a small neighbourhood of 0.
The travelling wave equation therefore reads
\begin{align}
\label{Eqn:TW1}
c^2 R^{\prime\prime}=\triangle_1\at{R-\Psi_\delta^\prime\at{R}}
\end{align}
and depends on the parameters $c$ and $\delta$.  In order to show that \eqref{Eqn:TW1} admits solutions
for small $\delta$ we rely on the following assumptions on $\Psi^\prime_\delta$, see  Figure \ref{Fig:potential} for an illustration.
\begin{assumption}
\label{Ass:DPsi}
Let $\at{\Psi_\delta}_{\delta>0}$ be a one-parameter family of $\fspaceC^2$-potentials such that
\begin{enumerate}
\item
$\Psi^\prime_\delta$ coincides with $\Psi^\prime_0$ outside the interval $\oointerval{-\delta}{\delta}$,
\item
there is a constant
$\DPsi$ independent of $\delta$ such that
\begin{align*}
\abs{\Psi_\delta^\prime\at{r}}\leq{\DPsi}\,,\qquad
\abs{\Psi_\delta^{\prime\prime}\at{r}}\leq\frac{\DPsi}{\delta}\,.
\end{align*}
for all $r\in\Rset$.
\end{enumerate}
\end{assumption}
The quantity
\begin{align*}
I_\delta := \tfrac{1}{2}\int_\Rset \bat{\Psi_\delta^\prime\at{r}-\Psi_0^\prime\at{r}}\dint{r}
\end{align*}
plays in important role in our perturbation result as it determines the leading order correction. Notice that our assumptions imply
\begin{align*}
I_\delta = \tfrac{1}{2}\int^\delta_{-\delta}\Psi_\delta^\prime\at{r}\dint{r}=- \tfrac{1}{2}\at{\Phi_\delta\at{+1}-\Phi_\delta\at{-1}}
\qquad\text{and hence}\qquad
\abs{I_\delta}\leq C_\Psi\delta \,.
\end{align*}
As already mentioned, the case $\delta=0$ has been solved in \cite{Schwetlick:07a}. The main result can be summarised as follows.

\begin{proposition}[\cite{Schwetlick:07a}, Proof of Theorem 3.11 and~\cite{Schwetlick:08a}, Theorem 1]
\label{Prop:ExistenceOfAnchor}
There exist $0<c_0 < 1$ such that for every $c \in[c_0,1)$, there exists a two-parameter family of
solutions $R_0\in\fspaceW^{2,\infty}\at\Rset$ to the travelling wave equation \eqref{Eqn:TW1} with $\delta=0$.
 This family is normalised by $R_0\at{0}=0$ and can be described as follows:
\begin{enumerate}
\item
\begin{mhchange} There exists a unique travelling wave $\bar{R}_0$ such that
\begin{align*}
  \bar{R}_0\at{x}\quad & \xrightarrow{\;x\to+\infty\;}\quad \bar r^+_c \\
\bar{R}_0\at{x}-\alpha^-_c\bat{\cos\at{k_cx}-1}-\beta^-_c\sin\at{k_cx}\quad&\xrightarrow{\;x\to-\infty\;}\quad \bar r^-_c
\end{align*}
for some constants $r^\pm_c$, $k_c$, $\alpha^-_c$, and $\beta^-_c$ depending on $c$.
\item There exists an open neighbourhood $U_c$ of $0$ in $\Rset^2$ such that for any $\pair{\alpha}{\beta}\in U_c$
the function $R_0=\bar{R}_0+\alpha\bat{\cos\at{k_c\cdot}-1}+\beta\sin\at{k_c\cdot}$ is a travelling wave with \end{mhchange}
\begin{enumerate}
\item  \label{it:anch-one} $\norm{R_0}_{\infty} \leq D_0\at{1-c^2}^{-1}$\,,
\item $R_0\at{x}>r_0$ for $x>x_0$ and $R_0\at{x}<-r_0$ for $x<-x_0$\,,
\item \label{it:anch-two} $R_0^\prime\at{x}>d_0$  for $\abs x<x_0$
\end{enumerate}
for some constants $x_0$, $r_0$, $d_0$, and $D_0$ depending on $c_0$.
\end{enumerate}
\end{proposition}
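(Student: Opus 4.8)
This result is established in \cite{Schwetlick:07a,Schwetlick:08a}; I outline the argument I would follow. Since $\Psi_0^\prime=\sgn$, the equation \eqref{Eqn:TW1} with $\delta=0$ reads $c^2R^{\prime\prime}=\triangle_1\bat{R-\sgn\at R}$, and the key observation is that it becomes \emph{linear} once the sign pattern of $R$ is prescribed. \textbf{Step 1 (ansatz).} I look for a wave with a single interface at the origin, normalised by $R\at0=0$, with $R>0$ on $\oointerval0\infty$ and $R<0$ on $\oointerval{-\infty}0$. Then $\sgn\at{R\at x}=\sgn\at x$ and \eqref{Eqn:TW1} turns into the inhomogeneous linear delay-advance differential equation $c^2R^{\prime\prime}-\triangle_1R=-\triangle_1\sgn$, whose right-hand side is the explicit, compactly supported function that equals $2\sgn\at x$ on $\oointerval{-1}1$ and vanishes elsewhere. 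What remains is to solve this linear equation in a class of bounded profiles, to verify a posteriori that the solutions really have the postulated sign pattern, and to read off the bounds (i)--(iii).

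\textbf{Step 2 (linear inversion).} Pass to Fourier variables: the operator $c^2\partial_x^2-\triangle_1$ has symbol $a\at k=4\sin^2\at{k/2}-c^2k^2$, which for $c<1$ has a double zero at $k=0$ and, for $c$ in a left-neighbourhood of the sonic speed, exactly one pair of simple real zeros $\pm k_c$ (with $k_c\to0$ as $c\to1$), all further zeros being non-real and bounded away from $\Rset$. Hence the space of bounded solutions of $c^2R^{\prime\prime}-\triangle_1R=0$ is spanned by the constants and by $\cos\at{k_c\cdot}$ and $\sin\at{k_c\cdot}$. Dividing the transform of the right-hand side by $a\at k$ requires regularising its non-integrable singularities on the real axis; after subtracting the kernel contributions and fixing a prescription at the real poles $\pm k_c$ one obtains a distinguished bounded solution, and the only freedom left is the amount of $\cos\at{k_c\cdot}-1$ and $\sin\at{k_c\cdot}$ carried along as $x\to+\infty$. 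Requiring the profile to be asymptotically constant as $x\to+\infty$ removes this freedom and, together with the normalisation at $0$, fixes $\bar R_0$ uniquely, producing the asymptotics in (i). Since $\cos\at{k_c\cdot}-1$ and $\sin\at{k_c\cdot}$ lie in the kernel and vanish at the origin, $R_0=\bar R_0+\alpha\bat{\cos\at{k_c\cdot}-1}+\beta\sin\at{k_c\cdot}$ solves the same linear problem and is still normalised by $R_0\at0=0$ for every $\pair\alpha\beta$; these profiles oscillate as $x\to+\infty$.

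\textbf{Step 3 (closing the ansatz and the bounds).} From the explicit Fourier/Green representation one reads off $\norm{R_0}_\infty\leq D_0\at{1-c^2}^{-1}$, a strictly positive lower bound for $R_0^\prime\at0$, strict monotonicity $R_0^\prime>d_0$ on a fixed interval $\abs x<x_0$ around the interface, and $\abs{R_0}>r_0$ outside it, uniformly for $c\in\cointerval{c_0}1$ with $c_0<1$ chosen suitably. The last three bounds force $\sgn\at{R_0\at x}=\sgn\at x$, which is exactly the self-consistency of Step 1. Restricting $\pair\alpha\beta$ to a small enough neighbourhood $U_c$ of $0$ in $\Rset^2$ keeps the correction $\alpha\bat{\cos\at{k_c\cdot}-1}+\beta\sin\at{k_c\cdot}$ small compared with $d_0$ and $r_0$, so (b) and (c) -- and hence the sign pattern -- survive on all of $U_c$.

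\textbf{Main obstacle.} The delicate point is the singular inversion in Step 2: the double zero of $a$ at $k=0$ together with the simple zeros at $\pm k_c$ makes $1/a$ far from integrable, so "division by the symbol" must be performed in a carefully chosen function space that simultaneously enforces the decay as $x\to+\infty$ and the prescribed harmonic tail as $x\to-\infty$, while still producing a representation explicit enough for the pointwise estimates of Step 3. Controlling the two competing effects -- the amplitude growth $\at{1-c^2}^{-1}$ and the degeneration $k_c\to0$ -- is exactly what forces the restriction to $c$ close to $1$.
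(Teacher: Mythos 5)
This proposition is not proved in the paper at all: it is quoted verbatim from \cite{Schwetlick:07a} (Proof of Theorem 3.11) and \cite{Schwetlick:08a} (Theorem 1), and your outline follows essentially the same route as those references — exploit the piecewise linearity through the ansatz $\sgn\at{R\at{x}}=\sgn\at{x}$, invert the symbol $4\sin^2\at{k/2}-c^2k^2$ in Fourier space with care at the real roots $k=0$ and $\pm k_c$, use the two-dimensional oscillatory kernel to generate the $\pair{\alpha}{\beta}$-family, and verify the sign pattern and the uniform bounds a posteriori. Your Step 3 defers the quantitative estimates (i)--(iii) to the explicit representation, which is exactly the technical content of the cited works, so the proposal is correct and consistent with the paper's (purely citational) treatment.
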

\begin{figure}[ht!]%
\centering{%
\includegraphics[width=0.65\textwidth, draft=\figdraft]%
{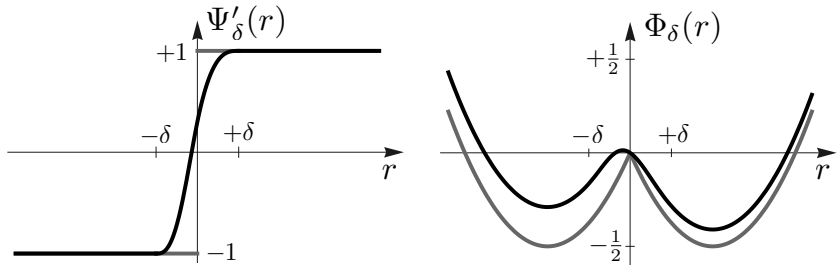}%
}%
\caption{%
Sketch of $\Psi_\delta^\prime$ and $\Phi_\delta$ for $\delta=0$ (grey) and $\delta>0$ (black).
Since $\Phi_0$ is symmetric, $-I_\delta$ is just half the energy difference between the two wells of $\Phi_\delta$.
}%
\label{Fig:potential}%
\end{figure}
\begin{figure}[ht!]%
\centering{%
\includegraphics[width=0.5\textwidth, draft=\figdraft]%
{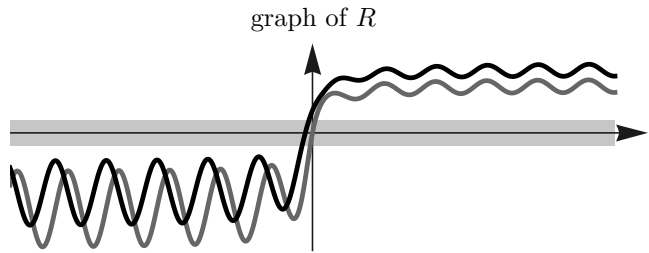}%
}%
\caption{%
Sketch of the waves for $\delta=0$ (grey) and $\delta>0$ (black) as provided by our perturbation result;
the shaded region indicates the spinodal interval $\ccinterval{-\delta}{+\delta}$, where  $\Psi^\prime_\delta$ differs from $\Psi^\prime_0$. Both waves differ by the constant $I_\delta=\DO{\delta}$ and a small corrector $S$
of order $\DO{\delta^2}$, which is oscillatory for $x<0$ but asymptotically constant as $x\to+\infty$. The tail oscillations of both waves do not penetrate
the spinodal region and are generated by harmonic waves with wave number $k_c$.
For each admissible $\delta$ and $c$ there exists exactly on wave
that satisfies the causality principle as it is non-oscillatory for $x\to+\infty$. 
}%
\label{Fig:waves}%
\end{figure}
\bigpar
The main result of this article can be described as follows.
\begin{theorem}
\label{Thm:Main}
For all $c_1\in\oointerval{c_0}{1}$
there exists $\delta_0>0$ such that for any $0<\delta<\delta_0$, any speed $c_0<c<c_1$, and any given wave
$R_0$ as in Proposition \ref{Prop:ExistenceOfAnchor} there exists a solution $R$  to \eqref{Eqn:TW1} with
\begin{align*}
R=R_0-I_\delta +S,
\end{align*}
where $I_\delta = \DO{\delta}$ and the corrector $S\in\fspace W^{2,\infty}\at\Rset$
\begin{enumerate}
\item
vanishes at $x=0$,
\begin{mhchange}
\item
is non-oscillatory as $x\to+\infty$, i.e., the limit $\lim_{x\to+\infty}S\at{x}$ is well defined,
\item
admits harmonic tail oscillations for $x\to-\infty$, that means there exists constants
$a_-$ and $d_-$ such that  $\lim_{x\to-\infty}S\at{x}-a_-R_0\at{x+d_-}$ is well defined,
\end{mhchange}
\item
is small in the sense of
\begin{align*}
\norm{S}_{\infty}=\nDO{\delta^2},\qquad
\norm{S^\prime}_{\infty}=\DO{\delta},\qquad
\norm{S^{\prime\prime}}_{\infty}=\DO{1}.
\end{align*}
\end{enumerate}
Moreover, the solution $R$ with these properties is unique provided that $\delta$ is sufficiently small.
\end{theorem}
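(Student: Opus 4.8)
The plan is to follow the perturbative strategy announced in the introduction: fix $c_1\in\oointerval{c_0}{1}$ and an anchor wave $R_0$ from Proposition~\ref{Prop:ExistenceOfAnchor}, substitute the ansatz $R=R_0-I_\delta+S$ into \eqref{Eqn:TW1}, and use that $R_0$ already solves the $\delta=0$ equation to derive an equation for the corrector $S$ alone. Because $\Psi_\delta^\prime$ agrees with $\Psi_0^\prime=\sgn$ outside $\oointerval{-\delta}{\delta}$ and $R_0$ crosses zero transversally with $R_0^\prime>d_0$ on $\abs x<x_0$ (items \ref{it:anch-two}, and the sign conditions), the nonlinearity $\triangle_1(\Psi_\delta^\prime(R)-\Psi_0^\prime(R_0))$ is supported in a small $\DO\delta$-neighbourhood of $x=0$ and is $\DO1$ in sup norm there; integrating twice against the explicit Green's kernel of the linear operator $\calM=c^2\partial_x^2-\triangle_1\circ(\id-\cdot)$ — equivalently writing the equation in the form \eqref{Intro:Eqn.1} with the factored operator $\calA^2\calG$ — should produce a right-hand side that is itself compactly supported and of size $\DO{\delta^2}$ in sup norm once the two integrations are carried out. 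The first substantive step is therefore to make this reduction precise and to record the mapping and smallness properties of $\calA^2\calG$; all of this is the content promised around \eqref{Intro:Eqn.1} and Fig.~\ref{Fig:g_fct}.

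Next I would set up the function space and the fixed-point operator. The linear operator $\calM$ has a Fourier symbol with real zeros, so it is not invertible on the natural spaces; the point is that on the subspace of corrector profiles with the prescribed tail behaviour — non-oscillatory as $x\to+\infty$ and a single harmonic mode (a multiple of the neutral solutions $\cos(k_c\cdot)$, $\sin(k_c\cdot)$, equivalently $R_0(\cdot+d_-)$) as $x\to-\infty$ — one can write down an explicit bounded right inverse $\calT$ up to a constant of integration $\eta(S)\in\Rset$, exactly as stated before Lemma~\ref{Lem:SelfMapping}. The free constant $\eta$ and the two tail parameters $a_-$, $d_-$ are precisely the three scalars needed to kill the cokernel directions coming from the three neutral modes, and the normalisation $S(0)=0$ fixes the remaining translational freedom. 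I would then define the admissible ball
\begin{align*}
\calB_\delta=\bigl\{\,S\in\fspace W^{2,\infty}(\Rset):\ S(0)=0,\ \text{prescribed tails},\ \norm{S}_\infty\le C\delta^2,\ \norm{S^\prime}_\infty\le C\delta,\ \norm{S^{\prime\prime}}_\infty\le C\,\bigr\}
\end{align*}
and verify, using the support and size estimates from the first step together with Assumption~\ref{Ass:DPsi}(ii) (which gives $\abs{\Psi_\delta^{\prime\prime}}\le\DPsi/\delta$, exactly compensating the $\delta$-width of the support), that $\calT$ maps $\calB_\delta$ into itself and is a contraction there for $\delta<\delta_0$; this is Lemma~\ref{Lem:SelfMapping}. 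The contraction mapping principle then yields a fixed point $S$, and unwinding the construction gives a solution $R=R_0-I_\delta+S$ with all the listed properties, with $I_\delta=\DO\delta$ by the bound $\abs{I_\delta}\le C_\Psi\delta$ recorded before the theorem.

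For the uniqueness clause I would argue that any solution $R$ of \eqref{Eqn:TW1} of the stated form — i.e.\ with $R-R_0+I_\delta$ small in $\fspace W^{2,\infty}$, non-oscillatory at $+\infty$, single harmonic mode at $-\infty$, and vanishing at $0$ — necessarily has its corrector $S:=R-R_0+I_\delta$ lying in $\calB_\delta$ (after possibly shrinking $\delta_0$ and adjusting $C$, since the a priori bounds $\norm S_\infty=\nDO{\delta^2}$ etc.\ force it into the ball), and that the tail and normalisation conditions pin down $\eta(S)$, $a_-$, $d_-$ uniquely so that $S$ is forced to be a fixed point of the same operator $\calT$; contractivity then gives $S$ unique. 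This is the role of Proposition~\ref{Thm:Uniqueness}. The main obstacle throughout is the non-Fredholm nature of $\calM$: constructing the right inverse $\calT$ explicitly, controlling the constant of integration $\eta(S)$ as a well-defined and Lipschitz function of $S$, and — most delicately — ensuring that the solution formula really does select the non-oscillatory behaviour at $+\infty$ while allowing exactly the neutral harmonic tail at $-\infty$ (rather than spurious oscillations at $+\infty$ that would be invisible to the naive energy estimates). Getting the tail bookkeeping right so that $\calT(\calB_\delta)\subseteq\calB_\delta$ with the sharp $\nDO{\delta^2}$ bound on $\norm S_\infty$, rather than merely $\DO{\delta^2}$, is the crux of the self-mapping step.
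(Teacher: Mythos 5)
Your overall strategy is the same as the paper's (corrector ansatz, compactly supported nonlinearity, right inverse of $\calM$ selecting the non-oscillatory tail at $+\infty$, contraction mapping), but there is a genuine gap at exactly the point you yourself call the crux: the $\nDO{\delta^2}$ bound on the right-hand side. You claim that since $\calG\at{S}$ is supported on an interval of length $\DO{\delta}$ and is $\DO{1}$ there, "integrating twice" yields a right-hand side of size $\DO{\delta^2}$. That is false in general: $\calA\calG\at{S}$ equals the total mass $\int_\Rset \calG\at{S}\dint{x}$ on an interval of length close to $1$, so $\calA^2\calG\at{S}$ is only as small as that mass, which for a function of height $\DO{1}$ on a $\DO{\delta}$ interval is generically $\DO{\delta}$ -- and then the whole scheme only produces $\norm{S}_\infty=\DO{\delta}$, not $\nDO{\delta^2}$. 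The actual mechanism in the paper is a cancellation in the mass: after normalising $I_\delta=0$ (equivalently, after the $-I_\delta$ shift you put into the ansatz but never use), one shows $\abs{\int_\Rset \calG\at{S}\dint{x}}\leq C\at{1+\norm{S''}_\infty}\delta^2$ by changing variables $r=R_0\at{x}+S\at{x}$ near the interface (using the transversality $R_0'\at{0}\geq d_0$ and admissibility of $S$) and Taylor-comparing $1/z\at{r}$ with $1/z\at{0}$, so that the condition $\int_{-\delta}^{\delta}\Psi'_\delta\at{r}\dint{r}=0$ kills the leading-order contribution; this is Lemma \ref{Lem.G.Props.A} and Corollary \ref{Cor:G.Props.1}, and the paper explicitly notes that without the $I_\delta$-normalisation one only gets the weaker bound of order $\delta$. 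Your sketch contains neither the change of variables nor any use of the $-I_\delta$ shift, so the self-mapping estimate with the sharp $\nDO{\delta^2}$ bound does not follow from what you wrote.

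Two smaller points. First, your "three scalars to kill the cokernel directions" picture misdescribes the linear step: the bounded kernel of $\calM$ is two-dimensional ($\cos\at{k_c\cdot}$, $\sin\at{k_c\cdot}$; constants are not in it since $\calM 1=1-c^2$), there is no solvability condition to enforce, and $a_-$, $d_-$ are outputs of the construction, not adjustable parameters; the real work is handling the non-integrable poles of $\widehat{Q}/m$ at $\pm k_c$ via the singular profiles $Y_1,Y_2$ and then adding a kernel element to enforce convergence at $+\infty$ (Lemmas \ref{Lem:M.Inv} and \ref{Lem:LinSolOp}). Second, your uniqueness argument ("any solution of the stated form lies in the ball, hence is the fixed point") needs the implicit constants in the $\DO{\cdot}$ bounds to match the ball's constants, or $\delta_0$ must be shrunk accordingly; the paper instead proves a stronger statement (Lemma \ref{Lem:Uniqueness}, Proposition \ref{Thm:Uniqueness}) requiring only that the wave crosses the spinodal region inside $\ccinterval{-\delta^\kappa}{\delta^\kappa}$ with $\kappa>\tfrac12$ and has the same tail as $R_0$ at $+\infty$, exploiting the resulting bound $\norm{R_2'-R_1'}_\infty\leq C\delta^{2\kappa-1}\norm{R_2'-R_1'}_\infty$. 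These two issues are repairable, but the missing mass-cancellation argument is essential and must be supplied for the stated $\nDO{\delta^2}$ smallness.
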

More detailed information about the existence and uniqueness part of our result are given in
Proposition~\ref{Thm:Existence} and  Proposition~\ref{Thm:Uniqueness}, respectively. We further mention:
\begin{enumerate}
\item 
Since the travelling wave equation is invariant under
\begin{align*}
c\rightsquigarrow-c,\qquad R\at{x}\rightsquigarrow R\at{-x},
\end{align*}
there exists an analogous result for $-1<c\ll0$. 
\item
Different choices of $c$ and $R_0$ provide different waves $R$, see Section \ref{sect:uniqueness}.
\item
The travelling wave equation \eqref{Eqn:TW1} is, of course,
invariant under shifts in $x$ but fixing $R_0$ and $S$ at $0$ removes neutral directions in the contraction proof.
\item
All constants derived below depend on $c_1$ and $c_0$ but for notational simplicity we do not write this dependence explicitly.
It remains open whether $\delta_0$ can be chosen independently of $c_1$.
\item\label{MainResult.CommentCausality} The \emph{causality principle} selects those solutions   with $c_\mathrm{gr}<c_\mathrm{ph}$ and $c_\mathrm{gr}>c_\mathrm{ph}$   for all oscillatory harmonic modes ahead and behind the interface, respectively, where, $c_\mathrm{gr}$ and $c_\mathrm{ph}$ are the group and the phase velocity. For nearest neighbour 
chains with interaction potential $\Phi_0$ and wave speed $c$ sufficiently close to $1$,  Proposition \ref{Prop:ExistenceOfAnchor} yields
\begin{align*}
c_{\mathrm{ph}}=c=a\at{k_c}=k_c^{-1}\Om\at{k_c}>
c_{\mathrm{gr}} =\Om^\prime\at{k_c}
\end{align*}
on \emph{both} sides of the interface, where $\Om\at{k}=2\sin\at{k/2}$ is the dispersion relation \cite{SCC05,Herrmann:10b}.  The causality principle therefore selects the solution $\bar{R}_0$
as it is the only wave having no tail oscillations ahead of the interface. Since our perturbative approach
changes neither the wave speed $c$ nor the wave 
number $k_c$ in the oscillatory modes (but only the amplitude behind the interface and, of course, the behaviour near the interface), we conclude that Theorem~\ref{Thm:Main} provides for each $\delta$ and $c$ exactly one wave that complies with the causality principle.
\item
The surprisingly simple leading order effect, that is the addition of $-I_\delta$ to $R_0$, implies that the kinetic relation does not change to order $\nDO\delta$.  Notice, however, that the kinetic relation depends on the choice of $R_0$, cf. \cite{Schwetlick:08a}.
\end{enumerate}
\bigpar
\begin{mhchange}
This paper is organised as as follows. In Section~\ref{sect:Prelim} we reformulate \eqref{Eqn:TW1} in terms of integral operators $\calA$ and $\calM$ and show that it is sufficient to prove the existence of waves for the
special case  $I_\delta=0$. Section~\ref{sect:proof} concerns
the existence of correctors $S$. We first establish an inversion formula for $\calM$ which in turn enables us to
define an appropriate solution operator $\calL$ to the affine subproblem $\calM S = \calA^2G+\eta$ with given $G$.
Afterwards we investigate the properties of the nonlinear operator $\calG$ and
prove the contractivity of the fixed point operator $\calT$. In Section~\ref{sect:uniqueness}
we establish our uniqueness result and conclude with a discussion of the kinetic relation in Section~\ref{sect:kinrel}.
\end{mhchange}
%
%
\section{Preliminaries and reformulation of the problem}\label{sect:Prelim}
%
%
%
In this section we reformulate the travelling wave equation \eqref{Eqn:TW1} in terms of integral
operators and show that elementary transformations allow us to assume that
$I_\delta=0$ holds for all $\delta>0$.
%
%
\subsection{Reformulation as integral equation}
%
%
For our analysis it is convenient to reformulate the problem in terms of the convolution operator $\calA$
and the operator $\calM$ defined by
\begin{align*}
\at{\calA{F}}\at{x}:=\int\limits_{{x}-1/2}^{{x}+1/2}F\at{s}\dint{s}\,,\qquad
\calM F:=\calA^2 F-c^2F\,.
\end{align*}
The travelling wave equation can then be stated as
\begin{align}
\label{Eqn:TW2}%
\calM{R}=\calA^2\Psi^\prime_\delta\at{R} + \mu\,.
\end{align}
\begin{lemma}
\label{Lem:IntEqn}
A function $R\in\fspaceW^{2,\infty}\at\Rset$ solves the travelling wave equation \eqref{Eqn:TW1} if and only if
there exists a constant $\mu\in\Rset$ such that $\pair{R}{\mu}$ solves \eqref{Eqn:TW2}.
\end{lemma}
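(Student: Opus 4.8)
The plan is to read the equivalence as \emph{differentiate twice} resp.\ \emph{integrate twice}, the bridge being the operator identity
\begin{align*}
  \at{\calA^2F}^{\prime\prime}=\triangle_1F\,,
\end{align*}
which I would first verify for every bounded continuous $F$ and then apply with $F\deq R-\Psi_\delta^\prime\at R$. Note that this $F$ is continuous, since $R\in\fspaceW^{2,\infty}\at\Rset\subset\fspaceC^1\at\Rset$ and $\Psi_\delta^\prime\in\fspaceC^1$, and bounded by Assumption~\ref{Ass:DPsi}(ii). For the identity I would use that $\calA$ maps $\fspaceC^0\at\Rset$ into $\fspaceC^1\at\Rset$ with $\at{\calA F}^\prime\at x=F\at{x+\ui}-F\at{x-\ui}$; iterating this, $\calA^2F=\calA\bat{\calA F}$ lies in $\fspaceC^2\at\Rset$ and $\at{\calA^2F}^{\prime\prime}\at x=F\at{x+1}-2F\at x+F\at{x-1}=\triangle_1F\at x$. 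Two further elementary facts worth recording are that $\calA$ fixes affine functions and that $\norm{\calA F}_\infty\leq\norm F_\infty$, since $\calA$ is a local average.

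For the \emph{if} direction, suppose $\pair R\mu$ solves \eqref{Eqn:TW2}. With $F=R-\Psi_\delta^\prime\at R$ this reads $\calA^2F=c^2R+\mu$; the left-hand side is in $\fspaceC^2\at\Rset$, hence so is $R$, and differentiating twice and using the identity gives $c^2R^{\prime\prime}=\triangle_1F=\triangle_1\bat{R-\Psi_\delta^\prime\at R}$, which is \eqref{Eqn:TW1}.

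For the \emph{only if} direction, let $R\in\fspaceW^{2,\infty}\at\Rset$ solve \eqref{Eqn:TW1}. Since $R^{\prime\prime}=c^{-2}\triangle_1F$ has a continuous representative, a bootstrap yields $R\in\fspaceC^2\at\Rset$, so $g\deq\calM R-\calA^2\Psi_\delta^\prime\at R=\calA^2F-c^2R$ lies in $\fspaceC^2\at\Rset$ and, by the identity, $g^{\prime\prime}=\triangle_1F-c^2R^{\prime\prime}=0$; thus $g$ is affine in $x$. The one point that needs care -- and the only genuine obstacle -- is to rule out a nonzero linear term: but $R\in\fspaceW^{2,\infty}\at\Rset$ is in particular \emph{bounded}, and $\Psi_\delta^\prime\at R$ is bounded by Assumption~\ref{Ass:DPsi}(ii), so $\calA^2F=\calA^2R-\calA^2\Psi_\delta^\prime\at R$ is bounded and hence so is $g=\calA^2F-c^2R$. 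A bounded affine function is constant, so $g\equiv\mu$ for some $\mu\in\Rset$, which is exactly \eqref{Eqn:TW2}. Everything except this boundedness observation is a direct computation; it is precisely here that the full $\fspaceW^{2,\infty}$-membership of $R$ (rather than merely $R^{\prime\prime}\in\fspaceL^\infty$) is used.
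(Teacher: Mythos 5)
Your proof is correct and follows essentially the same route as the paper: the identity $\frac{\ddiff^2}{\ddiff x^2}\calA^2=\triangle_1$ reduces both directions to differentiating/integrating twice, and the boundedness of $R$ and $\Psi_\delta^\prime\at{R}$ is used exactly as in the paper to kill the linear term (the paper integrates twice and concludes $\la=0$; you note that the affine function $g$ is bounded, hence constant — the same argument).
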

\begin{proof}
By definition of $\calA$, we have $\frac{\dint^2}{\dint{x}^2}\calA^2=\triangle_1$.
Equation \eqref{Eqn:TW1} is therefore, and due to the definition of $\calM$, equivalent to
\begin{align}
\label{Lem:IntEqn.Eqn1}
\at{\calM {R}}^{\prime\prime}=P^{\prime\prime}\,,\qquad
P:=\calA^2\Psi_\delta^\prime\at{R}\,.
\end{align}
The implication \eqref{Eqn:TW2}$\implies$\eqref{Eqn:TW1} now follows immediately. Towards the
reversed statement, we integrate \eqref{Lem:IntEqn.Eqn1}$_{1}$ twice with respect to $x$ and
obtain $\calM R=P+\la x + \mu$, where $\la$ and $\mu$ denote constants of integration. The condition
$R\in\fspaceL^\infty\at\Rset$ implies  $\calM{R},\,\Psi^\prime_\delta\at{R},\,P\in\fspaceL^\infty\at\Rset$, and we
conclude that $\la=0$.
\end{proof}
%
%
\subsection{Properties of the operators \texorpdfstring{$\calA$ and $\calM$}{...}}
%
Some of our arguments rely on Fourier transform, which we
normalise as follows
\begin{align*}
\widehat{F}\at{k}= \frac{1}{\sqrt{2\pi}}\int_\Rset\mhexp{\iu kx}F\at{x}\dint{x}\,,\qquad
F\at{x} = \frac{1}{\sqrt{2\pi}}\int_\Rset\mhexp{-\iu kx}\widehat{F}\at{k}\dint{k}\,.
\end{align*}
\begin{figure}[ht!]%
\centering{%
\includegraphics[width=0.3\textwidth, draft=\figdraft]%
{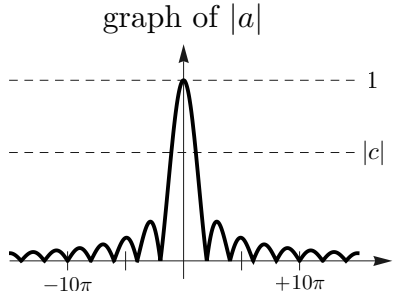}%
}%
\caption{%
The the real roots of the symbol function $m$ are the solutions to $\abs{a\at{k}}=\abs{c}$.
}%
\label{Fig:symbol}%
\end{figure}%
Using standard techniques for the Fourier transform in the space of tempered distributions we readily
verify the following assertions.
\begin{remark}
\label{Rem:MProps}
The operators $\calA$ and $\calM$ diagonalise in Fourier space and
have symbols
\begin{align*}
a\at{k}=\frac{\sin\at{k/2}}{k/2}
\quad\text{ and }\quad
m\at{k}=a\at{k}^2-c^2\,,
\end{align*}
respectively. \begin{mhchange} In particular, we have
\begin{align*}
\calM\cos\at{k_c\cdot}=0\,,\qquad
\calM\sin\at{k_c\cdot}=0\,,\qquad \calM{1}=1-c^2\,,
\end{align*}
for any real root $k_c$ of $m$, and
\begin{align*}
F\in\xspan\big\{\cos\at{k_c\cdot},\,\sin\at{k_c\cdot}\;:\; m\at{k_c}=0,\quad k_c>0\big\}
\end{align*}
for any tempered distribution $F$ with $\calM{F}=0$.
\end{mhchange}
\end{remark}
\begin{mhchange}
The set of real roots of $m$ depends strongly on the value of $c$, see Figure \ref{Fig:symbol}. In what follows
we only deal with positive and near sonic speed $c$, that means $c\lessapprox1$, for which $m$ has two simple real roots.
\end{mhchange}
\bigpar
We next summarise further properties of the operator $\calA$ and recall that the Sobolev space $\fspaceW^{1,\,p}\at\Rset$ is for any $1\leq{p}\leq\infty$ continuously embedded into $\fspace{BC}\at\Rset$. 
\begin{lemma}
\label{Lem:A.Props}
For any $1\leq{p}\leq\infty$ we have $\calA\colon\fspaceL^p\at\Rset\to{\fspace{W}^{1,p}\at\Rset}\subset
\fspace{BC}\at\Rset$ with
\begin{align}
\label{Lem:A.Props.Eqn1}
\norm{\calA{F}}_{p}\leq\norm{F}_{p}\,,\qquad
\norm{\at{\calA{F}}^\prime}_{p}\leq 2\norm{F}_{p}\,,\qquad
\norm{\calA{F}}_{\infty}\leq\norm{F}_{p}
\end{align}
for all $F\in\fspaceL^p\at\Rset$, where $\at{\calA F}^\prime=\nabla{F}:=F\at{\cdot+\tfrac{1}{2}}-
F\at{\cdot-\tfrac{1}{2}}$.  Moreover, $\supp F\subseteq\ccinterval{x_1}{x_2}$ implies
$\supp{\calA{F}}\subseteq\ccinterval{x_1-\tfrac{1}{2}}{x_2+\tfrac{1}{2}}$.
\end{lemma}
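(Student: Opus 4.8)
The plan is to prove Lemma~\ref{Lem:A.Props} by unwinding the definition of $\calA$ as a convolution with the indicator of $\ccinterval{-1/2}{1/2}$. Writing $\calA F = \chi * F$ with $\chi = \mathbf 1_{\ccinterval{-1/2}{1/2}}$, the first estimate $\norm{\calA F}_p \le \norm{F}_p$ is immediate from Young's convolution inequality, since $\norm{\chi}_1 = 1$; the case $p = \infty$ is equally direct from $\abs{(\calA F)(x)} \le \int_{x-1/2}^{x+1/2}\abs{F(s)}\dint s$. For the $\fspaceL^\infty$ bound in terms of the $\fspaceL^p$ norm one applies H\"older's inequality on the unit interval of integration: $\abs{(\calA F)(x)} \le \norm{\mathbf 1_{\ccinterval{x-1/2}{x+1/2}}}_{p'}\norm{F}_p = \norm{F}_p$ because the interval has length~$1$. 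So that part is routine.

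The next step is to identify the derivative. Formally differentiating the convolution moves the derivative onto $\chi$, and $\chi' = \de_{-1/2} - \de_{1/2}$ in the distributional sense, which gives $(\calA F)' = F(\cdot + 1/2) - F(\cdot - 1/2) = \nabla F$. I would justify this rigorously by noting that $G(x) := \int_{x-1/2}^{x+1/2} F(s)\dint s$ is, for $F\in\fspaceL^p_{\loc}$, absolutely continuous in $x$ with weak derivative $F(x+1/2) - F(x-1/2)$ — this is just the Lebesgue differentiation theorem applied to the antiderivative $\int_0^x F$, writing $G(x) = (\int_0^x F) \circ (x+1/2) - (\int_0^x F)\circ(x-1/2)$. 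Hence $\calA F \in \fspaceW^{1,p}$ whenever $F\in\fspaceL^p$, and $\norm{(\calA F)'}_p = \norm{F(\cdot+1/2) - F(\cdot-1/2)}_p \le 2\norm{F}_p$ by the triangle inequality and translation invariance of the $\fspaceL^p$ norm. The continuous embedding $\fspaceW^{1,p}(\Rset)\hookrightarrow\fspace{BC}(\Rset)$ is the standard Sobolev embedding in one dimension (recalled just before the statement), so the mapping claim $\calA\colon\fspaceL^p\to\fspaceW^{1,p}\subset\fspace{BC}$ follows.

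Finally, the support statement: if $\supp F \subseteq \ccinterval{x_1}{x_2}$, then $(\calA F)(x) = \int_{x-1/2}^{x+1/2} F(s)\dint s$ can be nonzero only if $\ccinterval{x-1/2}{x+1/2}$ meets $\ccinterval{x_1}{x_2}$, i.e. $x - 1/2 \le x_2$ and $x + 1/2 \ge x_1$, which is exactly $x\in\ccinterval{x_1 - 1/2}{x_2 + 1/2}$. Since $\calA F$ is continuous, its support is the closure of $\{\calA F \ne 0\}$, which is contained in this closed interval.

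I do not anticipate a genuine obstacle here — the lemma is a collection of elementary convolution estimates. The only point requiring a little care is the rigorous justification that $\calA$ maps $\fspaceL^p$ into $\fspaceW^{1,p}$ with the pointwise-a.e.\ derivative formula $(\calA F)' = \nabla F$ (rather than merely being Lipschitz or continuous), since the rest of the paper uses $\nabla F$ as an honest $\fspaceL^p$ function; this is handled by the absolute-continuity argument sketched above via Lebesgue's differentiation theorem.
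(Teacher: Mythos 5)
Your proposal is correct and takes essentially the same route as the paper's proof: the weak derivative of $\calA F$ is identified as $\nabla F$ and bounded by the triangle inequality, the $\fspaceL^p$ and $\fspaceL^\infty$ estimates come from H\"older (the paper derives the Young-type bound $\norm{\calA F}_p\leq\norm{F}_p$ by hand via H\"older and integration in $x$ rather than citing Young's inequality), and the support statement is read off directly from the definition of $\calA$. The extra detail you supply on the absolute-continuity justification of $(\calA F)^\prime=\nabla F$ is sound and merely makes explicit what the paper leaves implicit.
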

\begin{proof}
Let $1\leq{p}<\infty$ and $F\in\fspaceL^p\at\Rset$.
The definition of $\calA$ ensures
that $\calA{F}$ has in fact the weak derivative $\nabla{F}$, and this implies
the estimate \eqref{Lem:A.Props.Eqn1}$_2$ via
$\norm{\nabla F}_{p}\leq 2\norm{F}_{p}$.
Using H\"older's inequality we find
\begin{align*}
\babs{\at{\calA F}\at{x}}^p\leq \int_{x-1/2}^{x+1/2}\babs{F\at{s}}^p\dint{s}
\end{align*}
and integration with respect to $x$ yields \eqref{Lem:A.Props.Eqn1}$_1$. We also infer that
$\babs{\at{\calA F}\at{x}}\leq \norm{F}_{p}$ holds for all $x\in\Rset$, and this gives
\eqref{Lem:A.Props.Eqn1}$_3$. Finally,
the arguments for $p=\infty$ are similar and the claimed relation between $\supp{F}$ and $\supp{\calA{F}}$ is a direct consequence of the definition of $\calA$.
\end{proof}
%
%
\subsection{Transformation to the special case \texorpdfstring{$I_\delta=0$}{...}}
%
%
The key observation that traces the general case $I_\delta\neq0$ back to the special case
$I_\delta=0$ is that
any shift in $\Psi^\prime_\delta$ can be compensated by adding a constant to $R$.
\begin{lemma}
\label{Lem:Trafo}
The family $\nat{\tilde{\Psi}_{\tilde{\delta}}}_{\tilde\delta>0}$ defined by
\begin{align*}
\tilde{\delta}=\delta\at{1+C_\Psi}, \qquad
\tilde{\Psi}^{\prime}_{\tilde{\delta}}\at{r} = \Psi^\prime_\delta\at{r-I_\delta }
\end{align*}
satisfies Assumption \ref{Ass:DPsi} with constant $\tilde{C}_\Psi=C_\Psi\at{1+C_\Psi}$ as well as
\begin{align*}
\tilde{I}_{\tilde{\delta}}=\tfrac12\int_\Rset
\tilde{\Psi}^{\prime}_{\tilde{\delta}}\at{r} - \Psi^{\prime}_{0}\at{r}\dint{r}=0\quad\text{ for all}\quad \tilde\delta>0.
\end{align*}
Moreover, each solution $\npair{\tilde{R}}{\tilde\mu}$ to the modified travelling wave equation
\begin{align}
\label{Eqn:TW2a}
\calM \tilde{R}=\calA^2\tilde{\Psi}_{\tilde\delta}^\prime\nat{\tilde{R}}+\tilde{\mu}
\end{align}
defines a solution $\pair{R}{\mu}$ to \eqref{Eqn:TW2} via $R=\tilde{R}-I_\delta$ and $\mu=\tilde\mu-\at{c^2-1}I_\delta$ and vice versa.
\end{lemma}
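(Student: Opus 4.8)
The plan is to verify the three assertions of the lemma in turn, all by elementary manipulations. For the first — that $\nat{\tilde\Psi_{\tilde\delta}}_{\tilde\delta>0}$ is a family of $\fspaceC^2$-potentials satisfying Assumption~\ref{Ass:DPsi} with constant $\tilde C_\Psi=C_\Psi\at{1+C_\Psi}$ — I would argue as follows. Since $\tilde\Psi^\prime_{\tilde\delta}\at r=\Psi^\prime_\delta\at{r-I_\delta}$ is a translate of $\Psi^\prime_\delta\in\fspaceC^1$, it is again $\fspaceC^1$; fixing the additive normalisation by $\tilde\Psi_{\tilde\delta}\at0=0$, which influences nothing else since only $\tilde\Psi^\prime_{\tilde\delta}$ enters the problem, yields $\tilde\Psi_{\tilde\delta}\in\fspaceC^2$. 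For condition~(i) of Assumption~\ref{Ass:DPsi}, $\Psi^\prime_\delta$ agrees with $\Psi^\prime_0=\sgn$ outside $\oointerval{-\delta}{\delta}$, so $\tilde\Psi^\prime_{\tilde\delta}\at r$ agrees with $\sgn\at{r-I_\delta}$ whenever $r\notin\oointerval{I_\delta-\delta}{I_\delta+\delta}$; by the a priori bound $\abs{I_\delta}\leq C_\Psi\delta$ this interval lies inside $\oointerval{-\tilde\delta}{\tilde\delta}$ with $\tilde\delta=\delta\at{1+C_\Psi}$, and for $\abs r\geq\tilde\delta>\abs{I_\delta}$ one has $\sgn\at{r-I_\delta}=\sgn\at r$, so $\tilde\Psi^\prime_{\tilde\delta}$ coincides with $\Psi^\prime_0$ outside $\oointerval{-\tilde\delta}{\tilde\delta}$. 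For condition~(ii), $\babs{\tilde\Psi^\prime_{\tilde\delta}\at r}=\babs{\Psi^\prime_\delta\at{r-I_\delta}}\leq C_\Psi\leq\tilde C_\Psi$ and $\babs{\tilde\Psi^{\prime\prime}_{\tilde\delta}\at r}=\babs{\Psi^{\prime\prime}_\delta\at{r-I_\delta}}\leq C_\Psi/\delta=\tilde C_\Psi/\tilde\delta$. Finally $\delta\mapsto\tilde\delta=\delta\at{1+C_\Psi}$ is a bijection of $\oointerval0\infty$ onto itself, so the family is well defined.

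For the second assertion I would substitute $s=r-I_\delta$ and split the integrand against $\sgn$, obtaining $\tilde I_{\tilde\delta}=\tfrac12\int_\Rset\bat{\Psi^\prime_\delta\at s-\sgn\at{s+I_\delta}}\dint{s}=I_\delta-\tfrac12\int_\Rset\bat{\sgn\at{s+I_\delta}-\sgn\at s}\dint{s}$, where the first equality uses $I_\delta=\tfrac12\int_\Rset\at{\Psi^\prime_\delta-\Psi^\prime_0}\dint{s}$. The last integrand is supported on the interval between $0$ and $-I_\delta$, where it takes the constant value $2\,\sgn\at{I_\delta}$, so its integral is $2I_\delta$ and hence $\tilde I_{\tilde\delta}=0$. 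For the third assertion, let $\npair{\tilde R}{\tilde\mu}$ solve \eqref{Eqn:TW2a} and put $R=\tilde R-I_\delta$. Linearity of $\calM$ and $\calM 1=1-c^2$ (Remark~\ref{Rem:MProps}) give $\calM R=\calM\tilde R-\at{1-c^2}I_\delta$, while the pointwise identity $\Psi^\prime_\delta\at R=\Psi^\prime_\delta\at{\tilde R-I_\delta}=\tilde\Psi^\prime_{\tilde\delta}\nat{\tilde R}$ gives $\calA^2\Psi^\prime_\delta\at R=\calA^2\tilde\Psi^\prime_{\tilde\delta}\nat{\tilde R}$. Substituting \eqref{Eqn:TW2a} and matching the constant terms then shows that $\npair R\mu$ solves \eqref{Eqn:TW2} with $\mu=\tilde\mu-\at{c^2-1}I_\delta$; since $R$ and $\tilde R$ differ only by a constant, $R\in\fspaceW^{2,\infty}\at\Rset$ if and only if $\tilde R\in\fspaceW^{2,\infty}\at\Rset$. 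The converse implication follows by running the same computation backwards, the transformation $\npair{\tilde R}{\tilde\mu}\mapsto\npair R\mu$ being invertible.

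The proof is pure bookkeeping, so I do not expect any genuine obstacle. The only point that needs a moment's care — and the reason $\tilde\delta$ must be inflated by the factor $1+C_\Psi$ rather than kept equal to $\delta$ — is the verification of condition~(i) of Assumption~\ref{Ass:DPsi} for the translated potential: one must invoke $\abs{I_\delta}\leq C_\Psi\delta$ so that shifting the spinodal interval $\oointerval{-\delta}{\delta}$ by $I_\delta$ still leaves it inside $\oointerval{-\tilde\delta}{\tilde\delta}$.
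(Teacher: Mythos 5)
Your proof is correct and follows essentially the same route as the paper's, which verifies Assumption~\ref{Ass:DPsi} and $\tilde I_{\tilde\delta}=0$ by the same shift-of-variable computations and simply declares the equivalence of \eqref{Eqn:TW2} and \eqref{Eqn:TW2a} to be obvious. One small remark: carrying out your constant-matching with $\calM 1=1-c^2$ actually gives $\mu=\tilde\mu+\at{c^2-1}I_\delta$, so the sign in the stated relation $\mu=\tilde\mu-\at{c^2-1}I_\delta$, which you copied, appears to be a typo in the lemma rather than a flaw in your argument.
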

\begin{proof}
Due to $\abs{I_\delta}\leq C_\Psi\delta$ and our definitions we find
$
\tilde{\Psi}^\prime_{\tilde\delta}\at{r}=\Psi_0^\prime\at{r}$
at least for all $r$ with $\abs{r}\geq \tilde{\delta}$, as well as
\begin{align*}
\abs{\tilde{\Psi}_{\tilde\delta}^\prime\at{r}}\leq{C_\Psi}\leq\tilde{C}_\Psi\,,\qquad
\abs{\tilde{\Psi}_{\tilde\delta}^{\prime\prime}\at{r}}\leq \frac{{C_\Psi}}{\delta}= \frac{{C_\Psi}}{\delta} \frac{1+C_\Psi}{1+C_\Psi}=\frac{ \tilde{C}_\Psi}{\tilde\delta}\quad \text{for all}\quad r\in\Rset\,.
\end{align*}
We also have
\begin{align*}
\tilde{I}_{\tilde\delta} &=  \tfrac{1}{2}\int_\Rset \bat{\Psi_\delta^\prime\at{r-I_\delta}-\Psi_0^\prime\at{r}}\dint{r}= \tfrac{1}{2}\int_\Rset \bat{ \Psi_\delta^\prime\at{r}-\Psi_0^\prime\at{r+I_\delta}}\dint{r}\\
&=\tfrac{1}{2}\int_\Rset \bat{ \Psi_\delta^\prime\at{r}-\Psi_0^\prime\at{r}}\dint{r}+\tfrac{1}{2}\int_\Rset \bat{ \Psi_0^\prime\at{r}-\Psi_0^\prime\at{r+I_\delta}}\dint{r}=I_\delta -I_\delta=0\,.
\end{align*}
Finally, the equivalence of \eqref{Eqn:TW2} and \eqref{Eqn:TW2a} is obvious.
\end{proof}
%
%
\section{Existence of phase transition waves}\label{sect:proof}
%
%
\begin{mhchange}
In this section, we show that each phase transition wave for $\Psi_0$ persists under the perturbation $\Psi_0\rightsquigarrow\Psi_\delta$, provided that $\delta$ is sufficiently small. To this end we proceed as follows.
\end{mhchange}
\begin{enumerate}
\item
We fix $c\in\ccinterval{c_0}{c_1}$ with  $0<c_0<c_1<1$ as in
Proposition \ref{Prop:ExistenceOfAnchor} and Theorem \ref{Thm:Main}. Then
there exists a unique solution $k_c>0$ to $a\at{k_c}=c$, and this implies $m\at{\pm k_c}=0$,
$m^\prime\at{\pm k_c}\neq0$, and $m\at{k}\neq0$ for $k\neq\pm k_c$.
All constants derived below can be chosen independently of $c$ but are allowed to
depend on $c_0$ and $c_1$.
\item
Thanks to  Proposition \ref{Prop:ExistenceOfAnchor} and Lemma \ref{Lem:IntEqn},
we fix $\pair{R_0}{\mu_0}$ from the two-parameter family of solutions to the integrated travelling wave equation \eqref{Eqn:TW2} for $\delta=0$ and given $c$. Recall that   $R_0$ is normalised by $R_0\at{0}=0$.
\item
In view of Lemma \ref{Lem:Trafo}, we assume that $I_\delta=0$ holds for all $\delta>0$. To avoid
unnecessary technicalities we also assume from now on that $\delta$ is sufficiently small.
\end{enumerate}
In order to find a solution $\pair{R}{\mu}$ to the integrated travelling wave equation \eqref{Eqn:TW2} for  $\delta>0$, we further make the ansatz
\begin{align*}
R=R_0+S\,,\qquad\mu=\mu_0+\eta\,,
\end{align*}
and seek correctors $\pair{S}{\eta}$ such that
\begin{align}
\label{FPP:CorrEqn}
\calM S = \calA^2G +\eta\,,\qquad G=\calG\at{S}\,.
\end{align}
Here, the nonlinear operator $\calG$ is defined by
\begin{align}
\label{Eqn:Def.G}
{\calG\at{S}}\at{x}=
\Psi^\prime_\delta\bat{R_0\at{x}+S\at{x}}-\Psi^\prime_0\bat{R_0\at{x}}\,.
\end{align}
\begin{mhchange}
In order to identify a natural ansatz space $\fspace{X}$ for $S$, we first
remark that the smoothing properties of $\calA$, see Lemma \ref{Lem:A.Props}, imply
$S\in\fspaceW^{2,\infty}\at\Rset$. Notice, however, that $R=R_0+S$ is in general more regular due to the smoothness of $\Psi_\delta$. More precisely, \eqref{Eqn:TW2} combined with $\Psi_\delta\in\fspaceC^{k}\at\Rset$ yields $R\in\fspaceC^{k+1}\at\Rset$.  We also impose the normalisation condition $S\at{0}=0$
in order to eliminate the non-uniqueness that results from the shift invariance of
the travelling wave equation \eqref{Eqn:TW2}.  In fact, without this constraint
any corrector $S$ provides a whole family of other possible correctors via
$\tilde{S}=S\at{\cdot+x_0} + R_0\at{\cdot+x_0}-R_0$ with $x_0=\nDO{\delta^2}$.
\par
A key property of our existence and uniqueness result is that
$R$ has only harmonic tail oscillations
with wave number $k_c$ and that both $R$ and $R_0$ share
the same tail oscillations for $x\to+\infty$. The corrector $S$ is therefore non-oscillatory
in the sense that $S\at{x}$ converges as $x\to+\infty$ to some well-defined limit $\sigma$.
In summary, we seek solutions $\pair{S}{\eta}$ to \eqref{FPP:CorrEqn} with
$S\in\fspace{X}$ and $\eta\in\Rset$, where \end{mhchange}
\begin{align*}
\fspace{X}:=\Big\{S\in\fspaceW^{2,\infty}\at\Rset\;:\; S\at{0}=0\,,\quad \begin{mhchange}
\sigma=\lim_{x\to+\infty}S\at{x}\text{ exists}\,\Big\} \end{mhchange}
\end{align*}
\begin{mhchange}
is a closed subspace of $\fspaceW^{2,\infty}\at\Rset$ and hence a Banach space.
\end{mhchange}
%
%
%
%
\subsection{Inversion formula for \texorpdfstring{$\calM$}{the operator M}}
%
\begin{mhchange}
Our first task is to construct for given $G$ a solution $\pair{S}{\eta}$ to the affine equation \eqref{FPP:CorrEqn}$_1$.
In  a preparatory step, we therefore study the solvability of the equation
\begin{align}
\label{Eqn:AuxProblem}
\calM{F}=Q
\end{align}
using the Fourier transform for tempered distributions, where $Q\in\fspaceL^\infty\at\Rset$ is some given function.
This problem is not trivial because the symbol function $m$ has two simple roots at $\pm k_c$, or, equivalently,
because $0$ is an element of the continuous spectrum of $\calM$ corresponding to a two-dimensional space of
generalised eigenfunctions. We are therefore confronted with the following two issues in
Fourier space:
\begin{enumerate}
\item
$\widehat{F}$ is uniquely determined only up to
elements from the space
\begin{align*}
\xspan \big\{\,\delta_{-k_c}\at{k}\,,\,\delta_{+k_c}\at{k}\,\big\},
\end{align*}
which contains the Fourier transforms of all
bounded kernel functions of $\calM$.
\item
$\widehat{F}$ exhibits -- at least for generic $Q$ with $\widehat{Q}\at{\pm k_c}\neq0$ -- two poles at $\pm k_c$
and is hence not Lebesgue integrable in the vicinity of $\pm k_c$. In particular,
the dual pairing between $\widehat{F}$ and a Schwartz function is defined in the sense of Cauchy principal values only.
\end{enumerate}
The non-uniqueness is actually an advantage because it allows us to select
solutions with particular properties; see the proof of Lemma \ref{Lem:LinSolOp},
where we add an appropriately chosen kernel function to ensure
non-oscillatory behaviour for $x\to+\infty$. Concerning the non-integrable
poles at $\pm k_c$, we split $\widehat{F}$ into a two-dimensional
singular part and a remaining regular part,
and show that any solution $F$ to \eqref{Eqn:AuxProblem} belongs to some Lebesgue space provided that
$\widehat{Q}$ is sufficiently regular.
\end{mhchange}
\begin{figure}[ht!]%
\centering{%
\includegraphics[width=0.9\textwidth, draft=\figdraft]%
{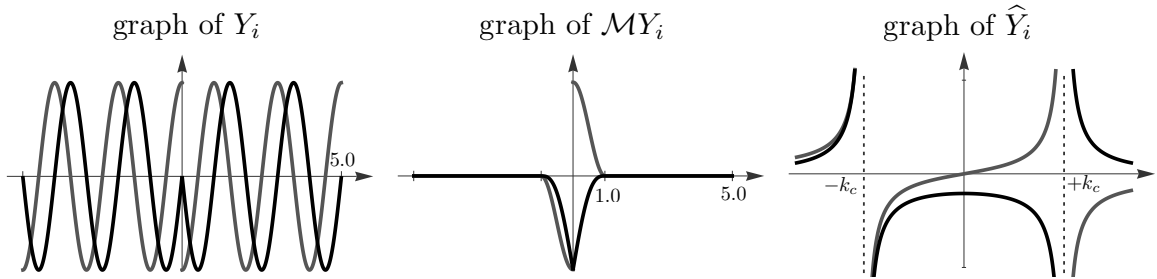}%
}%
\caption{%
Properties of $Y_1$ (grey) and $Y_2$ (black).
}%
\label{Fig:y_fct}%
\end{figure}
\bigpar
As illustrated in  Figure \ref{Fig:y_fct}, we introduce
two functions $Y_1,Y_2\in\fspaceL^\infty\at\Rset$ with
\begin{align*}
Y_{1}\at{x}:=\frac{ \sqrt{2\pi}}{m^\prime\at{k_c}}\cos\at{k_cx}\sgn\at{x}\,,\qquad
Y_{2}\at{x}:=\frac{ \sqrt{2\pi}}{m^\prime\at{k_c}}\sin\at{k_cx}\sgn\at{x}\,,
\end{align*}
and verify by direct computations the following assertions.

\begin{remark}\label{Rem:Y.Props}
We have
\begin{enumerate}
\item
$\calM Y_i\in\fspace{L}^\infty\at\Rset$ \; with \; $\supp\calM Y_i\subseteq\ccinterval{-1}{1}$\,,
\item
$\displaystyle \widehat{Y}_1\at{k}=+\frac{2\iu}{m^\prime\at{k_c}}\frac{k}{k^2-k_c^2}$ \; and \;
$\displaystyle \widehat{Y}_2\at{k}=-\frac{2}{m^\prime\at{k_c}}\frac{ k_c }{k^2-k_c^2}$\,,
\item
$m\widehat{Y}_i\in\fspace{L}^2\at\Rset\cap\fspace{BC}^1\at\Rset$\; with \;
 $\displaystyle\lim_{k\to\pm{k_c}}m\at{k}\widehat{Y}_{1}\at{k}=\pm\iu$ \; and \;
$\displaystyle\lim_{k\to\pm{k_c}}m\at{k}\widehat{Y}_{2}\at{k}=-1$.
\end{enumerate}
\end{remark}
\begin{mhchange}
In particular, $\widehat{Y}_1$ and $\widehat{Y}_2$ have normalised poles at $\pm{k_c}$, and this
allows us to derive the following linear and continuous inversion formula for $\calM$.
\end{mhchange}
\begin{lemma}
\label{Lem:M.Inv}%
Let $Q$ be given with $\widehat{Q}\in\fspaceL^2\at\Rset\cap\fspace{BC}^1\at{\Rset}$. Then there exists
a unique
 $Z\in\fspaceL^2\at\Rset$ such that 
\begin{align}
\label{Lem:M.Inv.Eqn1}
\calM\at{Z -\iu\frac{\widehat{Q}\at{+k_c}-\widehat{Q}\at{-k_c}}{2}Y_1-\frac{\widehat{Q}\at{+k_c}+\widehat{Q}\at{-k_c}}{2}Y_2}={Q}\,,
\end{align}
Moreover, $Z$ depends linearly on $Q$ and satisfies
\begin{align}
\notag
\norm{Z}_{2}\leq{C}\at{
\norm{\widehat{Q}}_{2}+
\norm{\widehat{Q}}_{1,\,\infty}}\,
\end{align}
for some constant $C$ independent of $Q$.
\end{lemma}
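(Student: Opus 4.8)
The plan is to work entirely in Fourier space, where \eqref{Lem:M.Inv.Eqn1} is algebraic: the functions $\widehat{Y}_1$ and $\widehat{Y}_2$ have simple poles at $\pm k_c$ with the normalised residues recorded in Remark \ref{Rem:Y.Props}(iii), and these poles are exactly what is needed to compensate the simple zeros of $m$ at $\pm k_c$. Concretely I set
\begin{align*}
\alpha\deq\iu\,\tfrac12\bat{\widehat{Q}\at{+k_c}-\widehat{Q}\at{-k_c}}\,,\qquad
\beta\deq\tfrac12\bat{\widehat{Q}\at{+k_c}+\widehat{Q}\at{-k_c}}\,,\qquad
N\deq\widehat{Q}+\alpha\,m\widehat{Y}_1+\beta\,m\widehat{Y}_2\,,
\end{align*}
which makes sense because $\widehat{Q}\in\fspace{BC}^1\at\Rset$ and, by Remark \ref{Rem:Y.Props}(iii), $m\widehat{Y}_i\in\fspaceL^2\at\Rset\cap\fspace{BC}^1\at\Rset$; hence $N\in\fspaceL^2\at\Rset\cap\fspace{BC}^1\at\Rset$ as well. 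The first thing I would check is that $N$ vanishes at $\pm k_c$: plugging in the limits $\lim_{k\to\pm k_c}m\at{k}\widehat{Y}_1\at{k}=\pm\iu$ and $\lim_{k\to\pm k_c}m\at{k}\widehat{Y}_2\at{k}=-1$ together with the definitions of $\alpha$ and $\beta$, a one-line computation gives $N\at{+k_c}=\widehat{Q}\at{+k_c}+\alpha\iu-\beta=0$ and similarly $N\at{-k_c}=0$.

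The key analytic step is then to show that $\widehat{Z}\deq N/m$ belongs to $\fspaceL^2\at\Rset$, so that $Z\deq\bat{\widehat{Z}}^\vee$ is a well-defined $\fspaceL^2$ function. Away from small neighbourhoods of $\pm k_c$ the symbol $m$ is bounded away from $0$ — on compact sets because $m\at{k}\neq0$ for $k\neq\pm k_c$, and for large $\abs{k}$ because $a\at{k}\to0$ forces $m\at{k}\to-c^2$ — so there $\abs{\widehat{Z}}\leq C\abs{N}$, which is square-integrable. In a neighbourhood of $\pm k_c$ we use that $m$ has a simple zero with $m^\prime\at{\pm k_c}\neq0$, so $\abs{m\at{k}}\geq\gamma\abs{k\mp k_c}$, while $N\in\fspace{BC}^1\at\Rset$ with $N\at{\pm k_c}=0$ gives $\abs{N\at{k}}\leq\norm{N^\prime}_\infty\abs{k\mp k_c}$ by the mean value theorem; hence $\widehat{Z}$ is bounded near $\pm k_c$ and in particular square-integrable there. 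With $Z$ in hand, multiplying back by $m$ gives $m\widehat{Z}=N=\widehat{Q}+\alpha\,m\widehat{Y}_1+\beta\,m\widehat{Y}_2$, and taking inverse Fourier transforms yields $\calM Z=Q+\alpha\,\calM Y_1+\beta\,\calM Y_2$, i.e. precisely \eqref{Lem:M.Inv.Eqn1}. Uniqueness of $Z$ within $\fspaceL^2\at\Rset$ follows from Remark \ref{Rem:MProps}: the difference of two $\fspaceL^2$ solutions is a bounded kernel function of $\calM$, hence a linear combination of $\cos\at{k_c\cdot}$ and $\sin\at{k_c\cdot}$, and the only such function lying in $\fspaceL^2\at\Rset$ is $0$. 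Linearity of $Q\mapsto Z$ is immediate, since $\alpha$, $\beta$, $N$ and the (inverse) Fourier transform all depend linearly on $Q$.

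It remains to estimate $\norm{Z}_2=\norm{N/m}_2$, which I would do by splitting the integral along the same two regimes. Away from $\pm k_c$ one obtains $\leq C\norm{N}_2\leq C\bat{\norm{\widehat{Q}}_2+(\abs\alpha+\abs\beta)\max_i\norm{m\widehat{Y}_i}_2}$, and near $\pm k_c$ one obtains $\leq C\norm{N^\prime}_\infty\leq C\bat{\norm{\widehat{Q}^\prime}_\infty+(\abs\alpha+\abs\beta)\max_i\norm{\at{m\widehat{Y}_i}^\prime}_\infty}$. Since $\abs\alpha+\abs\beta\leq2\norm{\widehat{Q}}_\infty$ and the quantities $\norm{m\widehat{Y}_i}_2$, $\norm{\at{m\widehat{Y}_i}^\prime}_\infty$ are finite constants independent of $Q$, both contributions are bounded by $C\bat{\norm{\widehat{Q}}_2+\norm{\widehat{Q}}_{1,\,\infty}}$, which is the asserted bound. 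The only genuinely delicate point is the behaviour of the quotient $N/m$ at the two poles $\pm k_c$; everything else is bookkeeping with the explicit data of Remarks \ref{Rem:MProps} and \ref{Rem:Y.Props} together with Plancherel's theorem, although some care is needed to justify the Fourier-side manipulations in the class of tempered distributions.
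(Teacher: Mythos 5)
Your proposal is correct and follows essentially the same route as the paper: both define $\widehat{Z}$ as the quotient of $\widehat{Q}+\alpha\,m\widehat{Y}_1+\beta\,m\widehat{Y}_2$ by $m$, use the normalised limits of $m\widehat{Y}_i$ at $\pm k_c$ to see that the numerator vanishes there (the paper via l'H\^ospital/Taylor expansion, you via the mean value theorem, which is the same point), conclude $\widehat{Z}\in\fspaceL^2\at\Rset$ by splitting near/away from $\pm k_c$, and obtain uniqueness because the kernel of $\calM$ consists of non-$\fspaceL^2$ harmonics. The estimate for $\norm{Z}_2$ is also obtained by the same two-regime splitting as in the paper, so no further comment is needed.
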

\begin{proof}
 The function $\hat{Z}$ with 
\begin{align}
\label{Lem:M.Inv.Eqn5}
\widehat{Z}\at{k}:=\frac{\displaystyle\widehat{Q}\at{k}+\iu\frac{\widehat{Q}\at{+k_c}-\widehat{Q}\at{-k_c}}{2}m\at{k}\widehat{Y}_1\at{k}+
\frac{\widehat{Q}\at{+k_c}+\widehat{Q}\at{-k_c}}{2}m\at{k}\widehat{Y}_2\at{k}}{m\at{k}}
\end{align}
is well-defined and continuously differentiable for $k\neq\pm k_c$. In view of Remark \ref{Rem:Y.Props}, l'H\^ospital's rule ensures that  the limits $\lim_{k\to-k_c}\widehat{Z}\at{k}$ and $\lim_{k\to + k_c}\widehat{Z}\at{k}$ do exist,
and combining this with the integrability properties of $m$ and $\widehat{Q}$ we find
$\widehat{Z}\in\fspaceL^2\at\Rset$. The inverse Fourier transform
${Z}\in\fspaceL^2\at\Rset$ is therefore well-defined by Parseval's theorem, depends linearly on $Q$, and satisfies \eqref{Lem:M.Inv.Eqn1} by construction. With $J:=\ccinterval{-2k_c}{+2k_c}$ we readily verify the estimates
\begin{align*}
\norm{\widehat{Z}}_{\fspaceL^2\at{\Rset\setminus{J}}}&\leq\norm{{m}^{-1}}_{\fspaceL^\infty\at{\Rset\setminus{J}}}
\norm{\widehat{Q}}_{\fspaceL^2\at{\Rset\setminus{J}}}+\Bat{\babs{\widehat{Q}\at{+k_c}}+
\babs{\widehat{Q}\at{-k_c}}}
\Bat{\norm{\widehat{Y}_1}_{\fspaceL^2\at{\Rset\setminus{J}}}+
\norm{\widehat{Y}_2}_{\fspaceL^2\at{\Rset\setminus{J}}}}\\
&\leq{C}\bat{\norm{\widehat{Q}}_{\fspaceL^2\at{\Rset}}+\norm{\widehat{Q}}_{\fspaceL^\infty\at{\Rset}}},
\end{align*}
and Taylor expanding both the numerator and the denominator of the right hand side in \eqref{Lem:M.Inv.Eqn5}
at $k=\pm{k_c}$ we get
\begin{align*}
\norm{\widehat{Z}}_{\fspaceL^2\at{J}}\leq
{C}\norm{\widehat{Z}}_{\fspace{C}^0\at{J}}\leq
{C}\norm{\widehat{Q}}_{\fspace{C}^{1}\at{J}}\,.
\end{align*}
The desired estimate for $\norm{Z}_2$ now follows from $\norm{{Z}}_{\fspaceL^2\at{\Rset}}^2=
\norm{\widehat{Z}}_{\fspaceL^2\at{\Rset\setminus{J}}}^2+\norm{\widehat{Z}}_{\fspaceL^2\at{J}}^2$. Finally,
$Z$ is the unique solution in $\fspaceL^2\at\Rset$ since any other solution
to \eqref{Lem:M.Inv.Eqn1} differs from $Z$ by a linear combination of $\cos\at{k_c\cdot}$ and $\sin\at{k_c\cdot}$,
see Remark \ref{Rem:MProps}.
\end{proof}
\begin{mhchange}%
Lemma \ref{Lem:M.Inv} implies that the linear operator $\calM$ admits a linear and continuous inverse
\begin{align*}
\calM^{-1}\colon \calF^{-1}\bat{\fspaceL^2\at\Rset\cap\fspace{BC}^1\at\Rset}\to\fspaceL^2\at\Rset\oplus\xspan\big\{Y_1,\,Y_2\big\}\,,
\end{align*}
where $\calF^{-1}$ means inverse Fourier transform. The proof of Lemma \ref{Lem:M.Inv} also reveals that
$\calM^{-1}$ can be extended to a larger space since one only needs that
$\widehat{Q}$ is continuously differentiable in some neighbourhood of $\pm k_c$. For our purpose, however, it is sufficient
to assume that $\widehat{Q}\in\fspace{BC}^1\at\Rset$.
We also mention that the constant $C$ in Lemma \ref{Lem:M.Inv}, which is the
Lipschitz constant of $\calM^{-1}$,
is uniform in $c_0<c<c_1$ but will grow with $c_1 \to 1$, due to the definition of $Y_1$ and $Y_2$ and the properties of $m$.
\end{mhchange}
%
%
\subsection{Solution operator to the affine subproblem}
%
%
We are now able to prove that the affine problem \eqref{FPP:CorrEqn}$_1$ admits a solution operator
\begin{mhchange}
\begin{align*}
\calL \colon G\in\fspace{Y}\mapsto \pair{S}{\eta} \in\fspace{X}\times\Rset\,,
\end{align*}
where
\begin{align*}
\fspace{Y}:=\Big\{G\in\fspaceL^\infty\at\Rset\;:\;\supp G\subseteq\ccinterval{-1}{1}\Big\}\,.
\end{align*}
The existence of $\calL$ is a consequence of the
following result.
\end{mhchange}
\begin{lemma}
\label{Lem:LinSolOp}
\begin{mhchange}
For each $G\in\fspace{Y}$ there exists a unique $\pair{S}{\eta}\in\fspace{X}\times\Rset$
such that
\end{mhchange}
\begin{align}
\label{Lem:LinSolOp.Eqn0}
\calM{S}=\calA^2 G +\eta\,.
\end{align}
\begin{mhchange}
Moreover, $S$ and $\eta$ depend linearly on $G$ and we have
\end{mhchange}
\begin{enumerate}
\item
$\abs{\eta}\leq{C_{\calM}}\norm{\calA^2G}_{\infty}\,,$
\item
$\norm{S}_{\infty}\leq{C_{\calM}}\norm{\calA^2 G}_{\infty}\,,$
\item
$\norm{S^\prime}_{\infty}\leq{C_{\calM}}\norm{\calA G}_{\infty}\,,$
\item
\label{it:reg} $\norm{S^{\prime\prime}}_{\infty}\leq{C_{\calM}}\norm{G}_{\infty}$.
\end{enumerate}
for some constant $C_{\calM}>0$  independent of $G$.
\end{lemma}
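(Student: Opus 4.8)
\emph{Strategy.} The plan is to solve the affine problem \eqref{Lem:LinSolOp.Eqn0} by inverting $\calM$ on $Q\deq\calA^2 G$ with the help of Lemma \ref{Lem:M.Inv}, then to correct the resulting function by a bounded kernel element of $\calM$ and a constant so that it lands in $\fspace{X}$, and finally to read off the bounds (iii), (iv) from the pointwise identity $c^2 F=\calA^2 F-\calM F$, which holds because $\calM=\calA^2-c^2\id$.

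\emph{Inversion.} First I would check that Lemma \ref{Lem:M.Inv} applies to $Q\deq\calA^2 G$. Since $G\in\fspaceL^1\at\Rset$ has support in $\ccinterval{-1}{1}$, the function $\widehat G$ has bounded derivatives of all orders; together with $\widehat Q=a^2\widehat G$, the bound $a\at{k}^2=\DO{k^{-2}}$ and the fact that $a$ has bounded derivatives, this gives $\widehat Q\in\fspaceL^2\at\Rset\cap\fspace{BC}^1\at\Rset$. Lemma \ref{Lem:M.Inv} then provides $Z\in\fspaceL^2\at\Rset$, depending linearly on $Q$, such that
\begin{align*}
\tilde S\deq Z-c_1 Y_1-c_2 Y_2\,,\qquad
c_1\deq\iu\,\tfrac{\widehat Q\at{+k_c}-\widehat Q\at{-k_c}}{2}\,,\quad
c_2\deq\tfrac{\widehat Q\at{+k_c}+\widehat Q\at{-k_c}}{2}\,,
\end{align*}
satisfies $\calM\tilde S=\calA^2 G$, with $c_1,c_2\in\Rset$ because $Q$ is real-valued.

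\emph{Regularity and asymptotics of $\tilde S$.} Next I would bootstrap. From $\calM\tilde S=\calA^2 G$ and linearity, $\calM Z=\calA^2 G+c_1\calM Y_1+c_2\calM Y_2\in\fspaceL^\infty\at\Rset$ by Remark \ref{Rem:Y.Props}; hence $c^2 Z=\calA^2 Z-\calM Z$ together with $\calA\colon\fspaceL^2\at\Rset\to\fspace{BC}\at\Rset$ (Lemma \ref{Lem:A.Props}) shows $Z\in\fspaceL^\infty\at\Rset$, and so $\tilde S\in\fspaceL^\infty\at\Rset$; a second pass, $c^2\tilde S=\calA^2\tilde S-\calA^2 G$ with $\tfrac{\ddiff^2}{\ddiff x^2}\calA^2=\triangle_1$, upgrades this to $\tilde S\in\fspaceW^{2,\infty}\at\Rset$. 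Moreover $Z\in\fspaceL^2\at\Rset$ is Lipschitz on $\cointerval{1}{\infty}$ (there $Z=\tilde S+c_1 Y_1+c_2 Y_2$ with $Y_i$ smooth), so $Z\at{x}\to0$ as $x\to+\infty$, and since $Y_i\at{x}=\pm\tfrac{\sqrt{2\pi}}{m^\prime\at{k_c}}\{\cos,\sin\}\at{k_c x}$ for $\pm x>0$ we conclude
\begin{align*}
\tilde S\at{x}+\tfrac{\sqrt{2\pi}}{m^\prime\at{k_c}}\bat{c_1\cos\at{k_c x}+c_2\sin\at{k_c x}}\ \xrightarrow{\;x\to+\infty\;}\ 0\,.
\end{align*}

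\emph{Correction, uniqueness, and estimates.} Set $S\deq\tilde S+\al\cos\at{k_c\cdot}+\be\sin\at{k_c\cdot}+\ga$ and $\eta\deq\at{1-c^2}\ga$; by Remark \ref{Rem:MProps} this solves \eqref{Lem:LinSolOp.Eqn0} for every $\al,\be,\ga\in\Rset$. I would choose $\al\deq\tfrac{\sqrt{2\pi}}{m^\prime\at{k_c}}c_1$ and $\be\deq\tfrac{\sqrt{2\pi}}{m^\prime\at{k_c}}c_2$ to cancel the oscillatory contribution at $+\infty$, so that $\lim_{x\to+\infty}S\at{x}$ exists, and then $\ga\deq-\bat{\tilde S+\al\cos\at{k_c\cdot}+\be\sin\at{k_c\cdot}}\at{0}$ so that $S\at{0}=0$; then $S\in\fspace{X}$, and $\pair{S}{\eta}$ depends linearly on $G$. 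For uniqueness, if $\pair{S}{\eta},\pair{S_*}{\eta_*}\in\fspace{X}\times\Rset$ both solve \eqref{Lem:LinSolOp.Eqn0}, then $\calM\bat{S-S_*-\tfrac{\eta-\eta_*}{1-c^2}}=0$, so by Remark \ref{Rem:MProps} the difference equals $\tfrac{\eta-\eta_*}{1-c^2}$ plus a linear combination of $\cos\at{k_c\cdot},\sin\at{k_c\cdot}$; existence of the $+\infty$ limit kills the oscillatory part, and $S\at{0}=S_*\at{0}=0$ then forces $S=S_*$ and $\eta=\eta_*$. For the bounds, $\supp Q\subseteq\ccinterval{-2}{2}$ gives $\norm{\widehat Q}_{2},\norm{\widehat Q}_{1,\infty}\leq C\norm{\calA^2 G}_{\infty}$, hence $\abs{c_1},\abs{c_2},\abs{\al},\abs{\be}\leq C\norm{\calA^2 G}_{\infty}$ (as $\abs{m^\prime\at{k_c}}$ is bounded below on $\ccinterval{c_0}{c_1}$), while Lemma \ref{Lem:M.Inv} and the bootstrap yield $\norm{Z}_{2},\norm{Z}_{\infty}\leq C\norm{\calA^2 G}_{\infty}$; thus $\norm{\tilde S}_{\infty},\abs{\ga}\leq C\norm{\calA^2 G}_{\infty}$, which gives (i) (using $\abs{1-c^2}\leq1$) and (ii). Differentiating $c^2 S=\calA^2 S-\calA^2 G-\eta$ and using $\at{\calA^2 F}^{\prime}=\nabla\calA F$, $\at{\calA^2 F}^{\prime\prime}=\triangle_1 F$ with $\norm{\nabla H}_{\infty}\leq2\norm{H}_{\infty}$ and $\norm{\triangle_1 H}_{\infty}\leq4\norm{H}_{\infty}$ gives
\begin{align*}
\norm{S^{\prime}}_{\infty}\leq\frac{2}{c_0^2}\bat{\norm{\calA S}_{\infty}+\norm{\calA G}_{\infty}}\leq C\norm{\calA G}_{\infty}\,,\qquad
\norm{S^{\prime\prime}}_{\infty}\leq\frac{4}{c_0^2}\bat{\norm{S}_{\infty}+\norm{G}_{\infty}}\leq C\norm{G}_{\infty}\,,
\end{align*}
where the first estimate also uses (ii) and $\norm{\calA^2 G}_{\infty}\leq\norm{\calA G}_{\infty}\leq\norm{G}_{\infty}$, and $C_{\calM}$ is the largest constant obtained. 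The main obstacle is this two-step bootstrap turning the merely $\fspaceL^2$-solution $Z$ of the Fourier problem into $\tilde S\in\fspaceW^{2,\infty}\at\Rset$ and the decay $Z\at{x}\to0$: the latter is precisely what lets one fix the free kernel element of $\calM^{-1}$ on the non-oscillatory branch at $+\infty$ and so land in $\fspace{X}$.
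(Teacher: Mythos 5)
Your proposal is correct and follows essentially the same route as the paper: verify the hypotheses of Lemma \ref{Lem:M.Inv} for $Q=\calA^2G$, write $\tilde S=Z+f_1Y_1+f_2Y_2$, bootstrap $\fspaceW^{2,\infty}$-bounds from $c^2\tilde S=\calA^2\tilde S-\calA^2G$, and then add the same kernel functions $\cos\at{k_c\cdot}$, $\sin\at{k_c\cdot}$ and constant (your $\al,\be,\ga$ reproduce exactly \eqref{Lem:LinSolOp.Eqn3}--\eqref{Lem:LinSolOp.Eqn4}) to land in $\fspace{X}$, with the same uniqueness argument via Remark \ref{Rem:MProps}. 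The only deviations are cosmetic: you obtain the decay $Z\at{x}\to0$ from ``$\fspaceL^2$ plus Lipschitz'' instead of the paper's estimate $c^2\abs{Z\at{x}}=\abs{\at{\calA^2Z}\at{x}}$ for $\abs{x}\geq2$, and you read off the bounds (iii)--(iv) directly from $c^2S=\calA^2S-\calA^2G-\eta$ rather than from the corresponding bounds for $\tilde S$.
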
%
\begin{proof}
The function $Q:=\calA^2 G$ satisfies $\supp Q\subseteq\ccinterval{-2}{2}$, and using
\begin{align*}
 \abs{\widehat{Q}\at{k}}+ \abs{\frac{d}{dk}\widehat{Q}\at{k}}\leq{C}\int\limits_{-2}^2  \bat{1+\abs{x}}\abs{Q\at{x}} \dint{x}\leq C
\norm{Q}_{\infty}\,
\end{align*}
as well as $\norm{\widehat{Q}}_{2}=\norm{Q}_{2}$, we easily verify that
\begin{align*}
\norm{\widehat{Q}}_{2}+\norm{\widehat{Q}}_{1,\,\infty}\leq
C\norm{Q}_{\infty}\,.
\end{align*}
\begin{mhchange}
By Lemma \ref{Lem:M.Inv}, the function $\tilde{S}:=\calM^{-1}\calA^2{G}$ takes the form
$\tilde{S}=Z+f_1Y_1+f_2Y_2$, where
$Z\in\fspaceL^2\at\Rset$ and $f_1,f_2\in\Rset$ satisfy
\begin{align}
\label{Lem:LinSolOp.Eqn1}
\norm{Z}_{2}+\abs{f_1}+\abs{f_2}\leq C\norm{Q}_{\infty}={C}\norm{\calA^2 G}_{\infty}\,.
\end{align}
In particular, we have $\calM\tilde{S}=\calA^2{G}$ and hence
\begin{align*}
c^2 Z = \calA^2Z-\calA^2G + f_1\calM Y_1 + f_2\calM Y_2\,.
\end{align*}
The functions $\calM Y_1$, $\calM Y_2$ are supported in $\ccinterval{-1}{+1}$,
see Remark \ref{Rem:Y.Props}, and ${G}\in\fspace{Y}$ combined with Lemma \ref{Lem:A.Props} implies that
$\calA^2G$ vanishes outside of $\ccinterval{-2}{+2}$. For $\abs{x}\geq2$
we therefore find
\begin{align*}
c^2\babs{Z\at{x}}=\babs{\at{\calA^2Z}\at{z}}\leq
\at{\int_{x-1/2}^{x+1/2}\Bat{\at{\calA Z}\at{s}}^2\dint{s}}^{1/2}
\,\xrightarrow{\;\;x\to\pm\infty\;\;}\;0\,,
\end{align*}
thanks to H\"older's inequality and since Lemma \ref{Lem:A.Props} implies $\calA Z\in\fspaceL^2\at\Rset$. By definition of $\calM$, $Q$, and $\tilde{S}$ we also have
\begin{align}
\label{Lem:LinSolOp.Eqn2}
c^2\tilde{S}=-\calA^2G+\calA^2 \tilde{S}=-\calA^2G+\calA^2\bat{Z+f_1Y_1+f_2Y_2}\,,
\end{align}
and Lemma \ref{Lem:A.Props} ensures that
\end{mhchange}%
\begin{align*}
\norm{\calA^2 Z}_{\infty}\leq\norm{Z}_{2}\,,\qquad
\norm{\calA^2 Y_i}_{\infty}\leq\norm{Y_i}_{\infty}\,.
\end{align*}
Combining these estimates  with  \eqref{Lem:LinSolOp.Eqn1} and \eqref{Lem:LinSolOp.Eqn2}, we arrive at
$\tilde{S}\in\fspaceL^\infty\at\Rset$ with
\begin{align*}
\norm{\tilde{S}}_{\infty}\leq C \norm{\calA^2G}_{\infty}\,.
\end{align*}
Moreover, differentiating the first identity in \eqref{Lem:LinSolOp.Eqn2} with respect to $x$, we get
\begin{align*}
c^2\tilde{S}^\prime=\nabla\bat{-\calA G+\calA \tilde{S}}\,,\qquad
c^2\tilde{S}^{\prime\prime}&=\nabla\nabla\bat{-G+\tilde{S}}\,,
\end{align*}
where the discrete differential operator $\nabla$ is defined as
$\nabla{U}=U\at{\cdot+\tfrac{1}{2}}-
U\at{\cdot-\tfrac{1}{2}}$, cf.\ Lemma \ref{Lem:A.Props}.
This implies
\begin{align*}
\norm{\tilde{\calS}^\prime}_{\infty}\leq{C}\norm{\calA G}_{\infty}\,,\qquad
\norm{\tilde{\calS}^{\prime\prime}}_{\infty}\leq{C}\norm{ G}_{\infty}
\end{align*}
thanks to $\norm{\calA^2G}_{\infty}\leq \norm{\calA G}_{\infty}\leq\norm{G}_{\infty}$ and
$\norm{\calA \tilde{S}}_{\infty}\leq\norm{\tilde{S}}_{\infty}$.
\begin{mhchange}
Since $\tilde{S}$ does not belong to $\fspace{X}$, we
now define
\begin{align}
\label{Lem:LinSolOp.Eqn3}
S\at{x}:=\tilde{S}\at{x}-\tilde{S}\at{0}-f_1\frac{ \sqrt{2\pi}}{m^\prime\at{k_c}}\bat{\cos\at{k_cx}-1}-
f_2\frac{ \sqrt{2\pi}}{m^\prime\at{k_c}}\sin\at{k_cx}\,
\end{align}
as well as
\begin{align}
\label{Lem:LinSolOp.Eqn4}
\eta:=\at{1-c^2}\at{f_1\frac{ \sqrt{2\pi}}{m^\prime\at{k_c}}-\tilde{S}\at{0}}\,,
\end{align}
and observe that $S\in\space{X}$ and
\eqref{Lem:LinSolOp.Eqn0} hold by construction.
Moreover, $S$ and $\eta$ depend linearly on $G$ and
the above estimates for $f_1$, $f_1$ and $\tilde{S}$ provide the desired estimates for both $S$ and $\eta$.
Finally, the uniqueness of $\pair{S}{\eta}$ is a direct consequence of ${S}\in\fspace{X}$ and
Lemma \ref{Rem:MProps}.
\end{mhchange}
\end{proof}
\begin{mhchange}
Notice that the solution $\pair{S}{\eta}$ to \eqref{FPP:CorrEqn}$_1$ is unique only in the space $\fspace{X}\times\Rset$
and that further solution branches exists due to the nontrivial kernel functions of $\calM$. For instance, replacing \eqref{Lem:LinSolOp.Eqn3} and \eqref{Lem:LinSolOp.Eqn4} by
\begin{align*}
S\at{x}:=\tilde{S}-\tilde{S}\at{0}\,,\qquad
\eta:=-\at{1-c^2}\tilde{S}\at{0}
\end{align*}
we can define an operator
\begin{align}
\label{Eqn:AlternativeL}
\bar{\calL}\colon G\in\fspace{Y}\to \pair{\bar{S}}{\bar\eta}\in\bar{\fspace{X}}\times\Rset\,,\qquad
\bar{\fspace{X}}:=\big\{\bar{S}\in\fspaceW^{2,\infty}\at\Rset\;:\; S\at{0}=0\big\}\,,
\end{align}
which provides another solution to the affine problem \eqref{FPP:CorrEqn}$_1$. The corresponding corrector $S$,
however, does in general not belong to $\fspace{X}$ as it is oscillatory for both $x\to-\infty$ and $x\to+\infty$.
\par
We emphasise that the three-parameter family of travelling waves
$R=R_0+S$, which we construct below by fixed points arguments involving $\calL$,
is -- at least for sufficiently small $\delta$ -- independent of the
details in the definition of $\calL$. The reason is, roughly speaking, that changing $\calL$ is equivalent to changing
$R_0$, see the discussion at
the end of Section \ref{sect:uniqueness}. However, choosing $\fspace{X}\times\Rset$ as image space for $\calL$
provides more information on the resulting family of travelling waves:
The existence of $\lim_{x\to+\infty}S\at{x}$ reveals that
for each $c$ there exists one wave $R=R_0+S$ that complies with the causality principle as
it is non-oscillatory for $x\to+\infty$.
\end{mhchange}
%
\subsection{Properties of the nonlinear operator \texorpdfstring{$\calG$}{}}
%
In order to investigate the properties of the nonlinear superposition operator $\calG$, we introduce
a class of admissible perturbations $S$. More precisely, we
say that $S\in\fspace{X}$ is  \emph{$\delta$-admissible}
if there exist two number $x_-<0<x_+$, which both depend on $S$ and $\delta$, such that
\begin{enumerate}
\item
$R_0\at{x_\pm}+S\at{x_\pm}=\pm\delta$\,,
\item
$R_0\at{x}+S\at{x}<-\delta \quad \text{for}\quad x<x_-$\,,
\item
$R_0\at{x}+S\at{x}>+\delta \quad \text{for}\quad x>x_+$\,,
\item
$\tfrac12R_0^\prime\at{0}<R_0^\prime\at{x}+S^\prime\at{x}<2R_0^\prime\at{0} \quad \text{for}\quad x_-<x<x_+$\,,
\end{enumerate}
where $R_0$ is the chosen wave for $\delta=0$. Below we show that each sufficiently small ball in $\fspace{X}$ consists entirely of
$\delta$-admissible functions, and this enables us to find travelling waves by
the contraction mapping principle.
\bigpar
We are now able to derive the second key argument for our fixed-point argument.
\begin{lemma}
\label{Lem.G.Props.A}
Let $S\in\fspace{X}$ be $\delta$-admissible and $G=\calG\at{S}$ as in \eqref{Eqn:Def.G}. Then we have
\begin{align*}
\norm{G}_\infty\leq {C}\,,\qquad \supp G \subseteq \ccinterval{-C\delta}{C\delta}\,,\qquad
\int_\Rset G\at{x}\dint{x}\leq {C}\at{1+\norm{S^{\prime\prime}}_{\infty}}\delta^2
\end{align*}
for some constant $C$ independent of $S$ and $\delta$.
\end{lemma}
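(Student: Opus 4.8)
The plan is to estimate the three quantities separately, using the structure of $\calG\at{S}$ together with the definition of $\delta$-admissibility and Assumption \ref{Ass:DPsi}.

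First I would prove the support statement. The key observation is that $\Psi^\prime_\delta$ and $\Psi^\prime_0$ agree outside $\oointerval{-\delta}{\delta}$, so ${\calG\at{S}}\at{x}$ can only be nonzero at points $x$ where at least one of $R_0\at{x}$ and $R_0\at{x}+S\at{x}$ lies in $\oointerval{-\delta}{\delta}$. By $\delta$-admissibility, if $x>x_+$ then $R_0\at{x}+S\at{x}>\delta$ and hence (since $\norm{S}_\infty$ is small and $R_0$ is increasing near the interface) also $R_0\at{x}>-\delta$, so one has to check that the case $R_0\at{x}\le-\delta$ cannot occur for large positive $x$; this follows from property \ref{it:anch-two} and the bound $R_0\at{x}>r_0$ for $x>x_0$ in Proposition \ref{Prop:ExistenceOfAnchor}. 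Thus $\supp G$ is contained in $\ccinterval{x_-}{x_+}$. To turn this into $\ccinterval{-C\delta}{C\delta}$ I would use the lower bound $R_0^\prime\at{x}+S^\prime\at{x}>\tfrac12 R_0^\prime\at{0}=\tfrac12 d_0$ on $\cointerval{x_-}{x_+}$: since $R_0\at{x_+}+S\at{x_+}=\delta$, $R_0\at{0}+S\at{0}=0$, and the derivative of $R_0+S$ on $\ccinterval{0}{x_+}$ is at least $\tfrac12 d_0$, the mean value theorem gives $x_+\le 2\delta/d_0$, and symmetrically $x_-\ge -2\delta/d_0$. So $C$ can be taken as $2/d_0$.

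Next the $\fspaceL^\infty$-bound: directly from Assumption \ref{Ass:DPsi}(ii) we have $\babs{\Psi^\prime_\delta\at{r}}\le\DPsi$ for all $r$ and likewise $\babs{\Psi^\prime_0\at{r}}\le\DPsi$ (it equals $\sgn\at r$, bounded by $1\le\DPsi$), so $\norm{G}_\infty\le 2\DPsi$. Finally, for the integral bound I would write, for $x\in\ccinterval{x_-}{x_+}$,
\begin{align*}
{\calG\at{S}}\at{x}=\Psi^\prime_\delta\bat{R_0\at{x}+S\at{x}}-\Psi^\prime_0\bat{R_0\at{x}}
=\bat{\Psi^\prime_\delta-\Psi^\prime_0}\bat{R_0\at{x}+S\at{x}}+\Psi^\prime_0\bat{R_0\at{x}+S\at{x}}-\Psi^\prime_0\bat{R_0\at{x}}\,.
\end{align*}
For the second difference, since $\Psi^\prime_0=\sgn$ changes only across $0$ and $R_0$ is increasing near the interface, this term is supported near the single zero of $R_0+S$ (resp.\ of $R_0$) and, because the separation of these two zeros is $\nDO{\delta^2}$ (the zero of $R_0+S$ sits within $O\at{\norm S_\infty/d_0}$ of $0$, and $\norm S_\infty=\nDO{\delta^2}$ is part of the ansatz space via the contraction), its contribution to the integral is $O\at{\delta^2}$ — more carefully, one should phrase this using $\norm{S^{\prime\prime}}_\infty$ as in the statement, by Taylor-expanding $R_0$ about its zero and controlling the displacement of the zero of $R_0+S$ in terms of $R_0^\prime\at0=d_0$ and the size of $S$. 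For the first term, change variables $r=R_0\at{x}+S\at{x}$ on $\ccinterval{x_-}{x_+}$, where the change of variables is legitimate because the integrand $\bat{\Psi^\prime_\delta-\Psi^\prime_0}$ is supported where $r\in\oointerval{-\delta}{\delta}$ and there the derivative of $R_0+S$ is bounded below by $\tfrac12 d_0$; this yields
\begin{align*}
\int_{x_-}^{x_+}\bat{\Psi^\prime_\delta-\Psi^\prime_0}\bat{R_0\at{x}+S\at{x}}\dint{x}
=\int_{-\delta}^{\delta}\frac{\bat{\Psi^\prime_\delta-\Psi^\prime_0}\at{r}}{R_0^\prime+S^\prime}\dint{r}\,,
\end{align*}
and since $\int_{-\delta}^{\delta}\bat{\Psi^\prime_\delta-\Psi^\prime_0}\at{r}\dint r=2I_\delta=0$ in the reduced problem, only the variation of $1/\at{R_0^\prime+S^\prime}$ over the interval $\ccinterval{x_-}{x_+}$ of length $O\at{\delta}$ survives, giving a factor $O\at{\delta}$ times $\int_{-\delta}^\delta\babs{\Psi^\prime_\delta-\Psi^\prime_0}\dint r=O\at{\delta}$, hence $O\at{\delta^2}$. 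The variation of the denominator is controlled by $\norm{R_0^{\prime\prime}}_\infty+\norm{S^{\prime\prime}}_\infty$, which explains the factor $\at{1+\norm{S^{\prime\prime}}_\infty}$.

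The main obstacle is the integral estimate: the naive bound $\babs{\int G}\le\norm G_\infty\abs{\supp G}=O\at{\delta}$ is one power of $\delta$ too weak, so one genuinely needs the cancellation $I_\delta=0$ together with a quantitative control of how far the zero of $R_0+S$ has moved and how much $1/\at{R_0^\prime+S^\prime}$ varies across the (short) support. Making the change of variables rigorous — i.e.\ checking injectivity of $R_0+S$ on $\ccinterval{x_-}{x_+}$ and that its range covers $\oointerval{-\delta}{\delta}$ — is routine given $\delta$-admissibility, but keeping track of the exact dependence on $\norm{S^{\prime\prime}}_\infty$ (rather than, say, $\norm{S^\prime}_\infty$) requires a careful Taylor expansion of $R_0+S$ at its zero, which is where most of the work lies.
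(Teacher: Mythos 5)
Your argument follows, in substance, the same strategy as the paper's proof: $\supp G\subseteq\ccinterval{x_-}{x_+}$ with $\abs{x_\pm}\leq 2\delta/d_0$ from the lower bound on $R_0^\prime+S^\prime$ and the sign property $\sgn\bat{R_0\at{x}}=\sgn\at{x}$, the trivial $\fspaceL^\infty$ bound, and for the integral a change of variables $r=R_0\at{x}+S\at{x}$ (justified by $R_0^\prime+S^\prime\geq\tfrac12R_0^\prime\at0$ on $\ccinterval{x_-}{x_+}$) combined with the cancellation $\int_{-\delta}^{\delta}\Psi_\delta^\prime\at{r}\dint r=2I_\delta=0$ and the variation of $1/\at{R_0^\prime+S^\prime}$, which is controlled by $\norm{R_0^{\prime\prime}}_\infty+\norm{S^{\prime\prime}}_\infty$ over an interval of length $\DO{\delta}$. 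The one structural difference is the bookkeeping of the $\sgn$ contribution: the paper transforms only $\Psi_\delta^\prime\at{R_0+S}$, keeps $\int_{x_-}^{x_+}\sgn\at{R_0}\dint x=x_++x_-$ as a separate term, and then needs the Taylor estimate \eqref{Lem.G.Props.A.Eqn2} to show $x_++x_-=\DO{\at{1+\norm{S^{\prime\prime}}_\infty}\delta^2}$; you instead subtract $\Psi_0^\prime\at{R_0+S}-\Psi_0^\prime\at{R_0}$ first, and this term in fact vanishes identically on $\ccinterval{x_-}{x_+}$, because $S\in\fspace{X}$ gives $S\at0=0$, so $R_0+S$ and $R_0$ share the zero $x=0$ and both are positive to its right and negative to its left (admissibility and Proposition \ref{Prop:ExistenceOfAnchor}). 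So your decomposition actually spares you the separate estimate of $x_++x_-$ — a mild simplification, resting on exactly the same ingredients.

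The one point you must repair is your treatment of that $\sgn$-difference term: you bound the displacement of the zero of $R_0+S$ by invoking $\norm{S}_\infty=\nDO{\delta^2}$ ``via the contraction''. That is not available here: the lemma must hold for every $\delta$-admissible $S\in\fspace{X}$ with a constant depending only on $1+\norm{S^{\prime\prime}}_\infty$, and it is itself an ingredient in setting up the fixed-point scheme, so importing the conclusion of the contraction argument would be circular. The correct (and simpler) fix is the one noted above: $S\at0=0$ is built into $\fspace{X}$, hence the zero of $R_0+S$ on $\ccinterval{x_-}{x_+}$ is exactly $x=0$, the displacement is zero, and $\sgn\at{R_0+S}-\sgn\at{R_0}$ vanishes almost everywhere; no Taylor expansion of $R_0$ about its zero is needed. (The same fact $S\at0=0$ is what makes $z\at0=R_0^\prime\at0+S^\prime\at0$ in your variation estimate.) Finally, to get a constant independent of everything but $c_0,c_1$ you should record, as the paper does, that $\norm{R_0^{\prime\prime}}_\infty\leq C$ — this follows from $c^2R_0^{\prime\prime}=\Delta_1R_0-\Delta_1\sgn$ together with Proposition \ref{Prop:ExistenceOfAnchor}\ref{it:anch-one} — so that the factor $\norm{R_0^{\prime\prime}}_\infty+\norm{S^{\prime\prime}}_\infty$ indeed reduces to $C\at{1+\norm{S^{\prime\prime}}_\infty}$.
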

\begin{proof}
The first assertion is a consequence of $\norm{G}_\infty\leq 1+ C_\Psi$. Since $S$ is $\delta$-admissible,
we have
\begin{align*}
\supp G = \ccinterval{x_-}{x_+}\,,\qquad
\pm\delta=\pm\int_0^{x_\pm}\bat{R_0^\prime\at{x}+S^\prime\at{x}}\dint{x}
\end{align*}
with $x_\pm$ as above, and this implies
\begin{align}
\label{Lem.G.Props.A.Eqn0}
\frac{1}{2R_0^\prime\at{0}}\delta\leq\abs{x_\pm}\leq\frac{2}{R_0^\prime\at{0}}\delta\,,\qquad
\supp  G\subseteq \frac{2}{R_0^\prime\at{0}}\ccinterval{-\delta}{\delta}\,.
\end{align}
Using the Taylor estimate
\begin{align}
\label{Lem.G.Props.A.Eqn1}
\babs{R_0^\prime\at{x}+S^\prime\at{x}
-R_0^\prime\at{0}-S^\prime\at{0}}
\leq\at{\norm{R_0^{\prime\prime}}_{\infty}+\norm{S^{\prime\prime}}_{\infty}}\abs{x}\,,
\end{align}
we also verify that
\begin{align}
\label{Lem.G.Props.A.Eqn2}
\abs{x_\pm\mp\frac{\delta}{R_0^\prime\at{0}+S^\prime\at{0}}}
\leq
\frac{\abs{x_\pm}^2}{2}\frac{
\norm{R_0^{\prime\prime}}_\infty+\norm{S^{\prime\prime}}_\infty}{R_0^\prime\at{0}+S^\prime\at{0}}
\leq {4\delta^2}\frac{
\norm{R_0^{\prime\prime}}_\infty+\norm{S^{\prime\prime}}_\infty}{R_0^\prime\at{0}^3}\,.
\end{align}
A direct computation now yields
\begin{align}
\label{Lem.G.Props.A.Eqn3}
\int_\Rset G\at{x}\dint{x}&=\int_{x_-}^{x_+}
\Psi_\delta^\prime\bat{R_0\at{x}+S\at{x}}\dint{x}-\int_{x_-}^{x_+}\sgn\bat{R_0\at{x}}\dint{x}
=\int_{-\delta}^\delta \Psi_\delta^\prime\at{r}\frac{\dint{r}}{z\at{r}}-\abs{x_++x_-}\,,
\end{align}
due to $\sgn\bat{R_0\at{x}}=\sgn\at{x}$. Here,
the function $z$ with
$z\bat{R_0\at{x}+S\at{x}}=R_0^\prime\at{x}+S^\prime\at{x}$ for all $x\in\ccinterval{x_-}{x_+}$
is well-defined since $R+S_0$ is strictly increasing on $\ccinterval{x_-}{x_+}$. 
Thanks to \eqref{Lem.G.Props.A.Eqn1},
our assumption $I_\delta=\int_{-\delta}^{+\delta}\Psi_\delta^\prime\at{r}\dint{r}=0$,
and the estimate $z\at{r},z\at{0}\geq \tfrac12 R_0^\prime\at{0}$
we get
\begin{align*}
\abs{\int_{-\delta}^\delta \Psi_\delta^\prime\at{r}\frac{\dint{r}}{z\at{r}}} &=
\abs{\int_{-\delta}^\delta \Psi_\delta^\prime\at{r}\at{\frac{1}{z\at{r}}-\frac{1}{z\at{0}}}\dint{r}}
\leq
\int_{-\delta}^\delta \abs{\Psi_\delta^\prime\at{r}}\frac{\abs{z\at{r}-z\at{0}}}{z\at{r}z\at{0}}\dint{r}
\\&\leq
C\delta\frac{\bat{\abs{x_+}+\abs{x_-}}\bat{
\norm{R_0^{\prime\prime}}_\infty+\norm{S^{\prime\prime}}_\infty}}{R_0^\prime\at{0}^2}\,.
\end{align*}
and combining this with \eqref{Lem.G.Props.A.Eqn0}, \eqref{Lem.G.Props.A.Eqn2} and \eqref{Lem.G.Props.A.Eqn3} gives
\begin{align}
\label{Lem.G.Props.A.Eqn4}
\abs{\int_\Rset G\at{x}\dint{x}}&
\leq
\abs{x_-+x_+}+C\delta\bat{\abs{x_-}+\abs{x_+}}\frac{
\norm{R_0^{\prime\prime}}_\infty+\norm{S^{\prime\prime}}_\infty}{R_0^\prime\at{0}^2}
\leq
C\delta^2\frac{
\norm{R_0^{\prime\prime}}_\infty+\norm{S^{\prime\prime}}_\infty}{R_0^\prime\at{0}^3}\,.
\end{align}
By Proposition~\ref{Prop:ExistenceOfAnchor}~\ref{it:anch-two}, $R_0^\prime\at{0}$ is bounded from below. Moreover, combining Proposition~\ref{Prop:ExistenceOfAnchor}~\ref{it:anch-one} with the equation for $R_0^{\prime\prime}$, that is
\begin{equation*}
c^2 R_0^{\prime\prime} = \Delta_1 R_0 - \Delta_1 \sgn{}\,,
\end{equation*}
we find a constant $C$, which depends only on $c_0$ and $c_1$, such that
$\norm{R_0^{\prime\prime}}_\infty \leq C$.
The second and third assertion are now direct consequences of these observations and the estimates \eqref{Lem.G.Props.A.Eqn2} and \eqref{Lem.G.Props.A.Eqn4}.
\end{proof}
\begin{corollary}
\label{Cor:G.Props.1}
There exists a constant $C_\calG$, which is independent of $\delta$, such that
\begin{align}
\label{Cor:G.Props.1.Eqn1}
\norm{\calA{G}}_\infty\leq
C_\calG\delta\,,
\qquad
\norm{\calA^2{G}}_\infty\leq
C_\calG\at{1+\norm{S^{\prime\prime}}_\infty}\delta^2\,,
\end{align}
holds with $G=\calG\at{S}$ for all $\delta$-admissible $S$.
\end{corollary}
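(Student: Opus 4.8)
The plan is to bound $\norm{\calA G}_\infty$ and $\norm{\calA^2 G}_\infty$ by exploiting the three estimates from Lemma~\ref{Lem.G.Props.A} together with the smoothing inequalities for $\calA$ from Lemma~\ref{Lem:A.Props}. The key point is that $G$ is supported in a tiny interval $\ccinterval{-C\delta}{C\delta}$ and uniformly bounded, so its $L^1$ and $L^2$ norms are already $\DO\delta$, which by Young's-type estimates gives $\norm{\calA G}_\infty = \DO\delta$; the second convolution is more subtle because a naive bound only yields $\DO\delta$ again, whereas we need $\DO{\delta^2}$, so the (nearly) vanishing mean of $G$ must be used.

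First I would estimate $\norm{\calA G}_\infty$. Since $\supp G\subseteq\ccinterval{-C\delta}{C\delta}$ and $\norm{G}_\infty\leq C$, we have $\norm{G}_1\leq 2C^2\delta=\DO\delta$. Applying Lemma~\ref{Lem:A.Props} with $p=1$ gives $\norm{\calA G}_\infty\leq\norm{G}_1\leq C_\calG\delta$, which is the first inequality in \eqref{Cor:G.Props.1.Eqn1}.

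Next I would estimate $\norm{\calA^2 G}_\infty$, where the mean-zero structure enters. For $\abs x$ away from the origin (say $\abs x\geq 2$, once $\delta$ is small so that $\supp\calA G\subseteq\ccinterval{-1-C\delta}{1+C\delta}$), one has $\bat{\calA^2 G}\at{x}=0$, so it suffices to bound $\bat{\calA^2 G}\at{x}$ for bounded $x$. Writing $\bat{\calA^2G}\at x=\int_{x-1/2}^{x+1/2}\bat{\calA G}\at s\dint s$ and using that $\bat{\calA G}\at s=\int_{s-1/2}^{s+1/2}G\at t\dint t$, one sees that $\calA G$ is Lipschitz with $\bat{\calA G}'=\nabla G$ and, crucially, that $\calA G$ differs from the constant $\int_\Rset G\dint t$ only on the set where the averaging window $\ccinterval{s-1/2}{s+1/2}$ fails to contain the whole support $\ccinterval{x_-}{x_+}$ of $G$, i.e.\ on a set of measure $\DO\delta$. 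On that exceptional set $\abs{\calA G}\leq\norm G_1\leq C\delta$, and off it $\calA G$ equals $\int_\Rset G\dint t$, whose absolute value is $\leq C\at{1+\norm{S''}_\infty}\delta^2$ by the third estimate of Lemma~\ref{Lem.G.Props.A}. Integrating $\calA G$ once more over a window of length $1$ therefore gives
\begin{align*}
\babs{\bat{\calA^2 G}\at x}\leq \babs{\int_\Rset G\at t\dint t}+\at{\text{meas of exceptional set}}\cdot\norm{G}_1\leq C\at{1+\norm{S''}_\infty}\delta^2+C\delta\cdot C\delta\,,
\end{align*}
which is $C_\calG\at{1+\norm{S''}_\infty}\delta^2$, the second inequality. (Alternatively, one decomposes $G=G-\bar G\mathbf 1_{J}+\bar G\mathbf 1_J$ with $J$ a fixed interval containing $\supp G$ and $\bar G$ chosen so that $G-\bar G\mathbf 1_J$ has mean zero; then $\calA^2$ of the mean-zero piece is controlled using $\bat{\calA^2 H}''=\Delta_1 H$ and $\supp\calA^2 H$ compact, giving the $\delta^2$ gain, while $\calA^2\at{\bar G\mathbf 1_J}$ is explicit and $\DO{\delta^2}$ since $\babs{\bar G}\abs J=\babs{\int_\Rset G}=\DO{\delta^2}$.)

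The main obstacle is precisely the second estimate: the gain from $\DO\delta$ to $\DO{\delta^2}$ cannot come from the support size alone but must exploit the approximate cancellation $\int_\Rset G\dint x=\DO{\delta^2}$ established in Lemma~\ref{Lem.G.Props.A}, and one has to track carefully that the "error" region where $\calA G$ is not constant has length only $\DO\delta$ so that its contribution after the second averaging is $\DO\delta\cdot\norm G_1=\DO{\delta^2}$ rather than $\DO\delta$. Everything else is a routine application of the $\calA$-estimates in Lemma~\ref{Lem:A.Props} and the bounds in Lemma~\ref{Lem.G.Props.A}, with all constants depending only on $c_0$, $c_1$ through $R_0'\at 0$ and $\norm{R_0''}_\infty$.
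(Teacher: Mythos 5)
Your proposal is correct and follows essentially the same route as the paper: both exploit that $\calA G$ equals the constant $\int_\Rset G\dint{x}$ except on an exceptional set of measure $\DO{\delta}$ where it is bounded by $\norm{G}_1=\DO{\delta}$, so the second averaging yields $\abs{\calA^2G}\leq\abs{\int_\Rset G\dint{x}}+C\delta^2$, and then the refined bound $\abs{\int_\Rset G\dint{x}}\leq C\at{1+\norm{S''}_\infty}\delta^2$ from Lemma~\ref{Lem.G.Props.A} gives the $\delta^2$ gain. Your use of the $L^1$ bound with Lemma~\ref{Lem:A.Props} for the first estimate is only a cosmetic variant of the paper's argument.
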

\begin{proof}
Thanks to Lemma \ref{Lem.G.Props.A} and since $\calA$ is the convolution with the characteristic function of the interval $\ccinterval{-\tfrac12}{+\tfrac12}$, there exists a constant $C$ such that
\begin{align*}
\begin{array}{lclcl}
\displaystyle\abs{\calA G\at{x}}&\leq& C\delta &\;\text{  for  }\;& \nabs{x\pm\tfrac12}\leq C\delta\,,\\
\displaystyle\calA G\at{x}&=&\int_\Rset  G\at{x}\dint{x} &\;\text{  for  }\;& \abs{x}\leq \tfrac12-C\delta\,,\\
\displaystyle\calA G\at{x}&=&0 &\;\text{  for  }\;& \abs{x}\geq \tfrac{1}{2}+C\delta\,,
\end{array}
\end{align*}
see Figure \ref{Fig:g_fct} for an illustration.
The first bound in \eqref{Cor:G.Props.1.Eqn1} is now a consequence of  the trivial estimate  $\abs{\int_\Rset G\at{x}\dint{x}}\leq  \abs{\supp G}\norm{G}_\infty\leq {C}\delta$,  whereas the second one follows from
\begin{align*}
\abs{\at{\calA^2G}\at{x}}\leq C\delta^2 + \abs{\int_\Rset G\at{x}\dint{x}}\quad\text{for all}\quad x\in\Rset
\end{align*}
and the refined estimate $\abs{\int_\Rset G\at{x}\dint{x}}\leq {C}\at{1+\norm{S}^{\prime\prime}_\infty}\delta^2 $.
\end{proof}
\begin{figure}[ht!]%
\centering{%
\includegraphics[width=0.9\textwidth, draft=\figdraft]%
{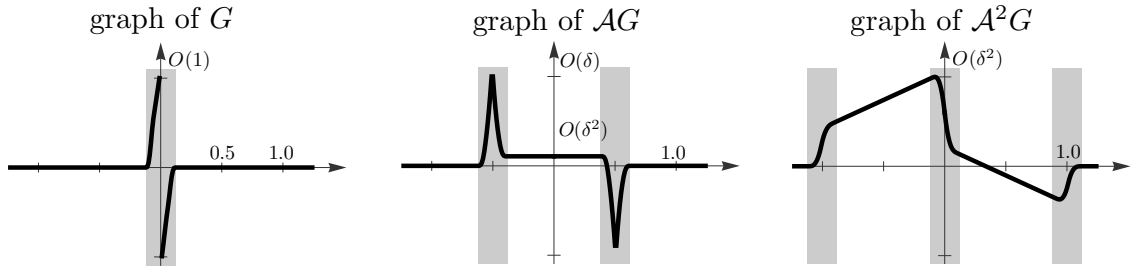}%
}%
\caption{%
Properties of $G=\calG\at{S}$ for $\delta$--admissible $S$. The shaded regions indicate intervals with length of order $\DO{\delta}$.
}%
\label{Fig:g_fct}%
\end{figure}
In the general case $I_\delta\neq0$, one finds -- due to
$\int_\Rset G\at{x}\dint{x}= 2I_\delta+\nDO{\delta^2}$ -- the weaker estimate
$\norm{\calA^2 G}_\infty\leq{C}\at{1+\norm{S^{\prime\prime}_\infty}}\delta$.
This bound is still sufficient to establish a fixed point argument, but provides only
a corrector $S$ of order $\DO{\delta}$. Recall, however, that Lemma \ref{Lem:Trafo} shows  that
shifting $\Psi_\delta$ and changing $R_0$ allows us to find correctors of order $\nDO{\delta^2}$ even in the case
$I_\delta\neq0$.
\bigpar
We finally derive continuity estimates for $\calG$.
\begin{lemma}
\label{Lem:Lip}
There exists a constant $C_L$ independent of $\delta$ such that
\begin{align*}
\norm{\calA^2 G_2- \calA^2 G_1}_\infty+
\norm{\calA G_2- \calA G_1}_\infty+\delta \norm{G_2- G_1}_\infty \leq
 C_L \delta \norm{S_2^\prime-S_1^\prime}_\infty
\end{align*}
holds for all $\delta$-admissible  correctors $S_1$  and $S_2$ with $G_\ell=\calG\at{S_\ell}$.
\end{lemma}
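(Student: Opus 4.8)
The plan is to bound each of the three differences $\calA^2G_2-\calA^2G_1$, $\calA G_2-\calA G_1$, and $G_2-G_1$ separately, and to control everything through the pointwise difference $R_0+S_2 - (R_0+S_1) = S_2-S_1$ together with its derivative $S_2'-S_1'$. First I would fix two $\delta$-admissible correctors $S_1,S_2$ and let $x_-^{(\ell)}<0<x_+^{(\ell)}$ be the associated crossing points, and write $G_\ell(x)=\Psi_\delta'\bat{R_0(x)+S_\ell(x)}-\Psi_0'\bat{R_0(x)}$. Since $\Psi_\delta'\in\fspaceC^1$ with $\abs{\Psi_\delta''}\leq C_\Psi/\delta$ by Assumption~\ref{Ass:DPsi}, for $x$ in the (common) region where both profiles lie in $\oointerval{-\delta}{\delta}$ we get $\abs{G_2(x)-G_1(x)}\leq (C_\Psi/\delta)\abs{S_2(x)-S_1(x)}$, and outside a set of length $\DO\delta$ the difference vanishes because $\Psi_\delta'=\Psi_0'$ there (and both $G_\ell$ are supported in intervals of length $\DO\delta$ by Lemma~\ref{Lem.G.Props.A}). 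The normalisation $S_\ell(0)=0$ lets me write $S_2(x)-S_1(x)=\int_0^x\bat{S_2'-S_1'}$, so on the support, where $\abs{x}\leq C\delta$, we have $\abs{S_2(x)-S_1(x)}\leq C\delta\norm{S_2'-S_1'}_\infty$. Combining, $\norm{G_2-G_1}_\infty\leq C\norm{S_2'-S_1'}_\infty$, which already gives the $\delta\norm{G_2-G_1}_\infty$ term with a factor $\delta$ to spare.

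Next, for the $\calA G$ term I would use that $\calA$ is convolution with $\mathbf 1_{[-1/2,1/2]}$, so $\calA G_2-\calA G_1 = \calA(G_2-G_1)$ and $\norm{\calA(G_2-G_1)}_\infty\leq \norm{G_2-G_1}_1\leq\abs{\supp(G_2-G_1)}\,\norm{G_2-G_1}_\infty$. Since the supports of $G_1$ and $G_2$ are both contained in an interval of length $C\delta$ (Lemma~\ref{Lem.G.Props.A}), the $L^1$-norm of the difference is at most $C\delta\norm{G_2-G_1}_\infty\leq C\delta\norm{S_2'-S_1'}_\infty$. For $\calA^2 G$, one more application of $\norm{\calA H}_\infty\leq\norm{H}_1$ is wasteful; instead I would note that $\calA^2(G_2-G_1)$ has $L^\infty$-norm bounded by $\norm{\calA(G_2-G_1)}_1$. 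Here $\calA(G_2-G_1)$ is supported in an interval of length $C\delta+1$, which would only give a bound of order $\delta$ without the needed extra power. The cleaner route is to observe that $\calA^2(G_2-G_1)$ is the convolution of the triangular hat function (width $1$, integral $1$) with $G_2-G_1$, hence bounded by $\norm{G_2-G_1}_1\leq C\delta\norm{S_2'-S_1'}_\infty$ directly. So all three terms carry the common factor $\delta\norm{S_2'-S_1'}_\infty$, and summing with a common constant $C_L$ finishes the estimate.

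The main subtlety — and the place I would be most careful — is the region near the crossing points $x_\pm^{(\ell)}$ where one profile has already entered or exited $\oointerval{-\delta}{\delta}$ while the other has not. There $G_2(x)-G_1(x)$ is a difference of the form $\Psi_\delta'(\cdot)-\Psi_0'(\cdot)$ evaluated at nearby but distinct "regimes", and the naive Lipschitz bound $\abs{\Psi_\delta''}\leq C_\Psi/\delta$ together with $\abs{S_2-S_1}\leq C\delta\norm{S_2'-S_1'}_\infty$ still gives $\abs{G_2-G_1}\leq C\norm{S_2'-S_1'}_\infty$ there as well, using that $\abs{\Psi_\delta'-\Psi_0'}\leq 2C_\Psi$ on the whole line and that the offending region has length $\DO{\abs{x_+^{(2)}-x_+^{(1)}}}=\DO{\delta\norm{S_2'-S_1'}_\infty}$ by the $\delta$-admissibility condition~(iv) (the profiles have derivative bounded below by $\tfrac12 R_0'(0)$, so the crossing points move proportionally to the pointwise change). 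This keeps the $L^1$- and hence $\calA$- and $\calA^2$-estimates unaffected. The remaining steps (tracking that the support of $G_2-G_1$ stays within an $\DO\delta$-interval, and that $R_0'(0)$ is bounded below by Proposition~\ref{Prop:ExistenceOfAnchor}~\ref{it:anch-two} so all constants are $\delta$-independent) are routine. Finally, pulling out the worst constant among the three contributions yields the claimed $C_L$, uniform in $c\in\ccinterval{c_0}{c_1}$ and in $\delta$ small.
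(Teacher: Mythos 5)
Your proposal is correct and follows essentially the same route as the paper: the global bound $\abs{\Psi_\delta^{\prime\prime}}\leq C_\Psi/\delta$, the normalisation $S_\ell\at{0}=0$ giving $\abs{S_2\at{x}-S_1\at{x}}\leq\abs{x}\,\norm{S_2^\prime-S_1^\prime}_\infty$, and the $\DO{\delta}$-support from Lemma \ref{Lem.G.Props.A} produce exactly the paper's chain of estimates, with the factor $\delta$ gained from the support length in the $\calA$- and $\calA^2$-terms. The ``subtlety'' you discuss near the crossing points is in fact vacuous, since $\Psi_0^\prime\at{R_0}$ cancels in $G_2-G_1$ and $\Psi_\delta^\prime$ is globally Lipschitz with constant $C_\Psi/\delta$, so your pointwise bound holds everywhere directly, which is precisely what the paper does.
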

\begin{proof} According to Lemma \ref{Lem.G.Props.A}, there exists a constant $C$, such that
$G_\ell\at{x}=0$ for all $x$ with $\abs{x}\geq{C}\delta$. For $\abs{x}\leq {C}\delta$, we
use Taylor expansions for $S_1-S_2$ at $x=0$ to find
\begin{align*}
\abs{S_2\at{x}-S_1\at{x}}\leq \norm{S_2^\prime-S_1^\prime}_\infty\abs{x}\,,
\end{align*}
where we used that $S_2\at{0}-S_1\at{0}=0$. Combining this estimate with the upper bounds for
$\Psi_\delta^{\prime\prime}$  gives
\begin{align*}
\babs{G_2\at{x}-G_1\at{x}}\leq{ \frac{C}{\delta}\babs{S_2\at{x}-S_1\at{x}}}
\leq C \norm{S_2^\prime-S_1^\prime}_\infty\,
\end{align*}
for all $\abs{x}\leq{C}\delta$, and this implies the desired estimate for $\norm{G_2-G_1}_\infty$. We also have
\begin{align*}
\norm{\calA^2 G_2- \calA^2G_1}_\infty\leq \norm{\calA G_2- \calA G_1}_\infty\leq\abs{\supp \at{G_2-G_1}}\norm{G_2-G_1}_\infty\leq{C}\delta\norm{G_2-G_1}_\infty\,,
\end{align*}
and this completes the proof.
\end{proof}
%
%
%
\subsection{Fixed point argument}
%
\begin{mhchange}
Now we have prepared all ingredients to prove that the operator
\begin{align*}
\calT:=\calP_{S}\circ\calL\circ\calG
\end{align*}
admits a unique fixed point in the space
\end{mhchange}
\begin{align*}
\fspace{X}_\delta:=\Big\{ S\in\fspace{X}\;:\; \norm{S}_\infty \leq{C_0}\delta^2\,,\quad
\norm{S^\prime}_\infty \leq{C_1}\delta\,,\quad
\norm{S^{\prime\prime}}_\infty \leq{C_2}
\Big\}\,.
\end{align*}
\begin{mhchange} Here, $\calP_{S}$ denotes the projector on the first component, that means $\calP_S\pair{S}{\eta}={S}$, \end{mhchange} and
the constants $C_i$ are defined by
\begin{align*}
C_2 := C_{\calM}\at{1+C_\Psi},\,\qquad
C_1:=C_{\calM}C_{\calG}\,,\qquad
C_0:=C_{\calM}C_{\calG}(1+C_2)\,.
\end{align*}
Notice that any fixed point of $\calT$ provides a solution to \eqref{FPP:CorrEqn} and vice versa.
\begin{lemma}
\label{Lem:SelfMapping}
\begin{mhchange}
For all sufficiently small $\delta$,
the operator $\calT$ has a unique fixed point in $X_\delta$.
\end{mhchange}
\end{lemma}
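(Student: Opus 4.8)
The plan is to verify that $\calT=\calP_S\circ\calL\circ\calG$ is a contractive self-mapping of the ball $\fspace{X}_\delta$, so that the claim follows from Banach's fixed point theorem (noting that $\fspace{X}_\delta$ is a closed subset of the Banach space $\fspace{X}$). There are two things to check: first, that $\calT$ maps $\fspace{X}_\delta$ into itself, and second, that $\calT$ is a contraction with respect to a suitable norm on $\fspace{X}_\delta$.

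For the self-mapping property, I would start with an arbitrary $S\in\fspace{X}_\delta$. The very first step — and I expect this to be where the technical care concentrates — is to show that every $S\in\fspace{X}_\delta$ is $\delta$-admissible in the sense defined above, so that Lemma~\ref{Lem.G.Props.A} and Corollary~\ref{Cor:G.Props.1} are applicable. This amounts to producing the crossing points $x_\pm$ and verifying the four monotonicity/sign conditions. Here one uses Proposition~\ref{Prop:ExistenceOfAnchor}: property~\ref{it:anch-two} ($R_0'>d_0$ on $\abs{x}<x_0$) together with $\norm{S'}_\infty\leq C_1\delta$ gives, for $\delta$ small, $\tfrac12 R_0'(0)<R_0'+S'<2R_0'(0)$ on a neighbourhood of $0$; property (ii) of the Proposition ($R_0>r_0$ for $x>x_0$, $R_0<-r_0$ for $x<-x_0$) together with $\norm{S}_\infty\leq C_0\delta^2\ll\delta<r_0$ forces the crossings $R_0+S=\pm\delta$ to occur inside $\abs{x}<x_0$, where the strict monotonicity just established makes them unique and gives the required sign conditions outside. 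Once admissibility is in hand, Corollary~\ref{Cor:G.Props.1} yields $\norm{\calA G}_\infty\leq C_\calG\delta$ and $\norm{\calA^2 G}_\infty\leq C_\calG(1+\norm{S''}_\infty)\delta^2\leq C_\calG(1+C_2)\delta^2$, while $G\in\fspace Y$ with $\norm{G}_\infty\leq 1+C_\Psi$ by Lemma~\ref{Lem.G.Props.A}. Feeding these into Lemma~\ref{Lem:LinSolOp} and using $\calP_S$ to extract $S^{\mathrm{new}}=\calT(S)$, the four estimates there give exactly $\norm{S^{\mathrm{new}}}_\infty\leq C_\calM C_\calG(1+C_2)\delta^2=C_0\delta^2$, $\norm{(S^{\mathrm{new}})'}_\infty\leq C_\calM C_\calG\delta=C_1\delta$, and $\norm{(S^{\mathrm{new}})''}_\infty\leq C_\calM(1+C_\Psi)=C_2$; moreover $S^{\mathrm{new}}\in\fspace X$ by Lemma~\ref{Lem:LinSolOp}. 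This is precisely why the constants $C_0,C_1,C_2$ were defined as they were, so the self-mapping is built in.

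For contractivity, take $S_1,S_2\in\fspace{X}_\delta$, both $\delta$-admissible by the above, and set $G_\ell=\calG(S_\ell)$, $(S_\ell^{\mathrm{new}},\eta_\ell)=\calL(G_\ell)$. Since $\calL$ is linear, $S_1^{\mathrm{new}}-S_2^{\mathrm{new}}=\calP_S\calL(G_1-G_2)$, so Lemma~\ref{Lem:LinSolOp} (applied with $G=G_1-G_2$, which is still supported in $\ccinterval{-1}{1}$) gives $\norm{(S_1^{\mathrm{new}})'-(S_2^{\mathrm{new}})'}_\infty\leq C_\calM\norm{\calA(G_1-G_2)}_\infty$. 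Now Lemma~\ref{Lem:Lip} bounds $\norm{\calA G_1-\calA G_2}_\infty\leq C_L\delta\norm{S_1'-S_2'}_\infty$, hence
\begin{align*}
\norm{(S_1^{\mathrm{new}})'-(S_2^{\mathrm{new}})'}_\infty\leq C_\calM C_L\,\delta\,\norm{S_1'-S_2'}_\infty.
\end{align*}
Thus, measuring $\fspace{X}_\delta$ with the $\norm{S'}_\infty$-seminorm (which is a genuine norm on $\fspace X$ because $S(0)=0$, or one may simply use $\norm{S}_{\fspaceW^{1,\infty}}$ and absorb the $\norm{S}_\infty$ part via the same Lipschitz bound on $\norm{\calA^2 G}$), the map $\calT$ is Lipschitz with constant $C_\calM C_L\delta$, which is $<1$ as soon as $\delta<\delta_0:=1/(C_\calM C_L)$. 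One subtlety: $(\fspace{X}_\delta,\norm{\cdot}')$ must be complete — this holds because the ball is closed in $\fspaceW^{2,\infty}$ and, on the closed ball where $\norm{S''}_\infty\leq C_2$ is fixed, the $\norm{\cdot}'$-topology and the $\fspaceW^{2,\infty}$-topology agree on Cauchy sequences up to the second-derivative bound being preserved under limits; alternatively, run the contraction directly in the complete metric space $(\fspace{X}_\delta,\norm{S_2'-S_1'}_\infty)$, noting closedness of the constraints under uniform convergence of first derivatives is enough since the $C_0,C_2$ bounds pass to the limit too. Then Banach's theorem gives a unique fixed point $S\in\fspace{X}_\delta$, and by construction $(S,\eta)$ with $\eta=\calP_\eta\calL\calG(S)$ solves \eqref{FPP:CorrEqn}.

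The main obstacle I anticipate is the admissibility verification at the start: one must be careful that the smallness of $\delta$ required for $\norm{S'}_\infty\leq C_1\delta$ to beat $d_0$ and for $\norm{S}_\infty\leq C_0\delta^2$ to beat $r_0$ and $\delta$ is uniform over the whole ball and over $c\in\ccinterval{c_0}{c_1}$ — this is exactly where the $c_1$-dependence of $\delta_0$ enters, through $C_\calM$ (which blows up as $c_1\to 1$) feeding into $C_0,C_1,C_2$. Everything else is a bookkeeping exercise in chaining the already-established linear estimate (Lemma~\ref{Lem:LinSolOp}), the nonlinear bounds (Lemma~\ref{Lem.G.Props.A}, Corollary~\ref{Cor:G.Props.1}), and the Lipschitz estimate (Lemma~\ref{Lem:Lip}), with the constants $C_i$ rigged so that the self-mapping closes without loss.
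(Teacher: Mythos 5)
Your proposal follows the paper's proof step for step: first the verification that every $S\in\fspace{X}_\delta$ is $\delta$-admissible via Proposition \ref{Prop:ExistenceOfAnchor} (lower bound on $R_0^\prime$ near $0$, lower bound on $\abs{R_0}$ away from $0$, beaten by $\norm{S}_\infty\leq C_0\delta^2$ and $\norm{S^\prime}_\infty\leq C_1\delta$ for small $\delta$), then the self-mapping property with the constants $C_0,C_1,C_2$ rigged through Corollary \ref{Cor:G.Props.1} and Lemma \ref{Lem:LinSolOp}, then contractivity with constant $C_\calM C_L\delta$ from the linearity of $\calL$ and Lemma \ref{Lem:Lip}, and finally Banach's theorem. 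The only substantive deviation is the choice of metric for the contraction, and there your primary variant has a genuine gap. You propose to contract in the seminorm $\norm{S_2^\prime-S_1^\prime}_\infty$ and assert that $\at{\fspace{X}_\delta,\norm{\cdot}'}$ is complete because ``the $\norm{\cdot}'$-topology and the $\fspaceW^{2,\infty}$-topology agree on Cauchy sequences'' and because ``closedness of the constraints under uniform convergence of first derivatives is enough.'' Neither claim is correct: on the unbounded domain $\Rset$, control of $\norm{S_n^\prime-S_m^\prime}_\infty$ together with $S_n\at{0}=0$ gives only locally uniform convergence of $S_n$, and the defining condition of $\fspace{X}$ -- existence of $\lim_{x\to+\infty}S\at{x}$ -- is not closed under such convergence. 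For instance, take a small multiple of $S\at{x}=\sin\bat{\log\at{1+x^2}}$, which lies in $\fspaceW^{2,\infty}\at\Rset$, vanishes at $0$, has $S^\prime\at{x}\to0$ but no limit of $S\at{x}$ as $x\to+\infty$; truncating it to be constant outside $\ccinterval{-n}{n}$ produces a sequence in $\fspace{X}_\delta$ that is Cauchy in your metric but whose limit is not in $\fspace{X}$. So the fixed point produced by that version of the argument need not lie in the space in which uniqueness and the non-oscillatory behaviour as $x\to+\infty$ are claimed.

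Your fallback repairs this and is in substance what the paper does: measure differences in the full $\fspaceW^{1,\infty}$-norm (the paper uses $\norm{S}_{\#}=\norm{S}_\infty+\norm{S^\prime}_\infty+\delta\norm{S^{\prime\prime}}_\infty$, which for fixed $\delta$ is equivalent to the $\fspaceW^{2,\infty}$-norm, so completeness of $\fspace{X}_\delta$ is immediate from its closedness in the Banach space $\fspace{X}$). The same Lipschitz bound closes the estimate for the extra terms, since Lemma \ref{Lem:LinSolOp} and Lemma \ref{Lem:Lip} give
\begin{align*}
\norm{\calT\at{S_2}-\calT\at{S_1}}_\infty+\norm{\calT\at{S_2}^\prime-\calT\at{S_1}^\prime}_\infty
+\delta\norm{\calT\at{S_2}^{\prime\prime}-\calT\at{S_1}^{\prime\prime}}_\infty
\leq C_\calM C_L\,\delta\,\norm{S_2^\prime-S_1^\prime}_\infty\,,
\end{align*}
and under Cauchyness in this stronger norm the $S_n$ converge uniformly, so the limit retains $S\at{0}=0$, the three bounds defining $\fspace{X}_\delta$, and the existence of the limit at $+\infty$. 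With that choice of metric your argument coincides with the paper's proof.
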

\begin{proof}
\emph{\underline{Step 1:}} We first show that each $S\in\fspace{X}_\delta$ is $\delta$-admissible provided that $\delta$ is sufficiently small. According to Proposition \ref{Prop:ExistenceOfAnchor}, there exist positive constants
$r_0$, $x_0$, and $d_0$ such that
\begin{align*}
\babs{R_0\at{x}}\ge r_0 \quad \text{for}\quad \abs x>x_0\,,
\qquad
d_0<R_0^\prime\at{x} \quad \text{for}\quad \abs x<x_0\,,
\end{align*}
and combining the upper estimate for $\norm{R_0}_\infty$ with the equation for $R_0$ we find
$\norm{R_0^{\prime\prime}}_\infty\leq D_2$ for some constant $D_2$.
We now set
\begin{align*}
\delta_0:=\frac12\min\left\{\frac{d_0}{2\tfrac{D_2}{d_0}+C_1},\,
x_0d_0,\,
\sqrt{\frac{r_0}{C_0}},\,r_0
\right\},\qquad
x_\delta:=\frac2{d_0}\delta
\end{align*}
and assume that $\delta<\delta_0$.
For any $x$ with $\abs x\le x_\delta\leq x_0$, we then estimate
\begin{align*}
\abs{R_0^\prime\at{x}+S^\prime\at{x}-R_0^\prime\at{0}}
\le D_2x_\delta+C_1\delta
\le \at{2\tfrac{D_2}{d_0}+C_1}\delta
\le \tfrac12d_0<\tfrac12R_0^\prime\at{0}\,,
\end{align*}
and this gives $\tfrac{1}{2}R_0^\prime\at{0}\leq R_0^\prime\at{x}+S^\prime\at{x}\leq
\tfrac{3}{2}R_0^\prime\at{0}$. Moreover, $x_\delta\leq \abs{x}\leq x_0$
implies
\begin{align*}
\babs{R_0\at{x}+S\at{x}}\geq
\abs{\int_0^{x}R_0^\prime\at{s}\dint{s}}-\norm{S^\prime}_\infty\abs{x}
\ge(d_0-C_1\delta)\abs{x}>
\tfrac12d_0\cdot\frac2{d_0}\delta=\delta\,,
\end{align*}
whereas for $\abs{x}>x_0$ we find
\begin{align*}
\babs{R_0\at{x}+S\at{x}}
\geq r_0-C_1\delta^2\geq\tfrac12 r_0\geq\delta\,.
\end{align*}
Using
\begin{align*}
x_-:=\max\{x\;:\;R_0\at{x}+S\at{x}\leq-\delta\}\,,\qquad
x_+:=\min\{x\;:\;R_0\at{x}+S\at{x}\geq+\delta\}\,,
\end{align*}
we now easily verify that $S$ is $\delta$-admissible.
\par
\emph{\underline{Step 2:}}
We next show that $\calT\at{\fspace{X}_\delta}\subset\fspace{X}_\delta$ for all $\delta<\delta_0$. Since each
$S\in\fspace{X}_\delta$ is $\delta$-admissible, Corollary \ref{Cor:G.Props.1} yields
\begin{align*}
\norm{\calA\calG\at{S}}_\infty\leq C_\calG\delta\,,\qquad
\norm{\calA^2\calG\at{S}}_\infty\leq C_\calG\at{1+ C_2}\delta^2
\end{align*}
and $\norm{\calG\at{S}}_\infty\leq 1+ C_\Psi$ holds by definition of $\calG$ and Assumption \ref{Ass:DPsi}.
Lemma \ref{Lem:LinSolOp} now provides
\begin{align*}
\begin{array}{lclcl}
\norm{\calT\at{S}}_\infty&\leq& C_\calM C_\calG\at{1+C_2}\delta^2&=&C_0\delta^2\,,\\
 \norm{\calT\at{S}^\prime}_\infty&\leq& C_\calM C_\calG\delta &=&C_1\delta\,,\\
\norm{\calT\at{S}^{\prime\prime}}_\infty&\leq& C_\calM \at{1+C_\Psi}&=&C_2\,,
\end{array}
\end{align*}
and hence $\calT\at{S}\in\fspace{X}_\delta$.
\par
\emph{\underline{Step 3:}}
We equip $\fspace{X}_\delta$ with the norm
$\norm{S}_{\#}=\norm{S}_\infty+\norm{S^\prime}_\infty + \delta\norm{S^{\prime\prime}}_\infty$,
which is, for fixed $\delta$, equivalent to the standard norm. For given $S_1, S_2\in\fspace{X}_\delta$,
we now employ the estimates from Lemma \ref{Lem:LinSolOp} and Lemma \ref{Lem:Lip} for $S=S_2-S_1$
and $G=\calG\at{S_2}-\calG\at{S_1}$. This gives
\begin{align*}
\norm{\calT\at{S_2}-\calT\at{S_1}}_{\#} &\leq C_\calM\Bat{
\norm{\calA^2 \calG\at{S_2}-\calA^2\calG\at{S_1}}_\infty +
\norm{\calA \calG\at{S_2}-\calA\calG\at{S_1}}_\infty+
\delta \norm{\calG\at{S_2}-\calG\at{S_1}}_\infty}\\
&
\leq C_\calM C_L \delta \norm{S_2^\prime-S_1^\prime}_{\infty}
\leq C_\calM C_L \delta \norm{S_2-S_1}_{\#}
\,,
\end{align*}
and we conclude that $\calT$ is contractive with respect to $\norm{\cdot}_{\#}$ provided that
$\delta<1/\at{C_\calM C_L}$. The claim is now a direct consequence of the Banach Fixed Point Theorem.
\end{proof}

\begin{mhchange}
The previous result implies the existence of a three-parameter family of waves that is parametrised by the speed $c\in\ccinterval{c_0}{c_1}$ and by $R_0$, where $R_0$ can be regarded as parameter in the two-dimensional   $\fspaceL^\infty$-kernel of $\calM$. 
\begin{proposition}
\label{Thm:Existence}
Suppose that $I_\delta=0$ for all $\delta$. Then there exists $\delta_0>0$ with the following property: For any $\delta<\delta_0$, each $c\in\ccinterval{c_0}{c_1}$, and any $R_0$ as in Proposition \ref{Prop:ExistenceOfAnchor}
there exists an $\delta$-admissible
\begin{align*}
S\in\fspace{X}_\delta\,\cap\,\at{\fspaceL^2\at\Rset\oplus\xspan{\Big\{1,\,Y_1 - \frac{ \sqrt{2\pi}}{m^\prime\at{k_c}} \cos\at{k_c\cdot}\,,\,Y_2- \frac{ \sqrt{2\pi}}{m^\prime\at{k_c}} \sin\at{k_c\cdot} \Big\}}}\,. 
\end{align*}
such that $R=R_0+S$ and solves the travelling wave equation \eqref{Eqn:TW2} for some $\mu$.
In particular, we have $R\at{0}=0$, the limits
\begin{align*}
\lim_{x\to-\infty}\bat{R\at{x}-\al_-\cos\at{k_cx}-\beta_-\sin\at{k_c{x}}}\qquad\text{and}\qquad
\lim_{x\to+\infty}\bat{R\at{x}-R_0\at{x}}
\end{align*}
are well-defined for some constants $\al_-$, $\be_-$ depending on $c$ and $R_0$,
and the estimates
\begin{align}
\label{Thm:FixedPoints.Eqn2}
R\at{x}\leq -\delta\quad{\text{for}}\quad x\leq-{C}\delta\,,\qquad
R\at{x}\geq +\delta\quad{\text{for}}\quad x\geq+{C}\delta
\end{align}
hold for some constant $C>0$ independent of $c$ and $R_0$.
\end{proposition}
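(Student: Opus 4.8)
The plan is to assemble the pieces already developed in this section: the fixed point from Lemma \ref{Lem:SelfMapping}, the structural description of the solution operator $\calL$ coming out of the proof of Lemma \ref{Lem:LinSolOp}, and the support estimates from Lemma \ref{Lem.G.Props.A} and Corollary \ref{Cor:G.Props.1}. First I would invoke Lemma \ref{Lem:SelfMapping}: for $\delta<\delta_0$ the operator $\calT=\calP_S\circ\calL\circ\calG$ has a unique fixed point $S\in\fspace{X}_\delta$, and by construction the pair $\pair{S}{\eta}$ with $\eta=\calP_\eta\circ\calL\circ\calG\at{S}$ solves $\calM S=\calA^2G+\eta$ with $G=\calG\at{S}$, so that $R=R_0+S$ together with $\mu=\mu_0+\eta$ solves \eqref{Eqn:TW2}. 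Step~1 of the proof of Lemma \ref{Lem:SelfMapping} shows every element of $\fspace{X}_\delta$ is $\delta$-admissible, which gives the first claim. The normalisation $R\at{0}=R_0\at{0}+S\at{0}=0$ is immediate from $R_0\at{0}=0$ and $S\in\fspace{X}$.

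Next I would extract the precise functional form of $S$. Looking at \eqref{Lem:LinSolOp.Eqn3}, the corrector produced by $\calL$ is $\tilde{S}-\tilde{S}\at{0}-f_1\tfrac{\sqrt{2\pi}}{m^\prime\at{k_c}}\bat{\cos\at{k_c\cdot}-1}-f_2\tfrac{\sqrt{2\pi}}{m^\prime\at{k_c}}\sin\at{k_c\cdot}$, where $\tilde{S}=Z+f_1Y_1+f_2Y_2$ with $Z\in\fspaceL^2\at\Rset$. Regrouping, $S$ lies in $\fspaceL^2\at\Rset\oplus\xspan\{1,\,Y_1-\tfrac{\sqrt{2\pi}}{m^\prime\at{k_c}}\cos\at{k_c\cdot},\,Y_2-\tfrac{\sqrt{2\pi}}{m^\prime\at{k_c}}\sin\at{k_c\cdot}\}$, exactly the space in the statement; the constant comes from the $-\tilde{S}\at{0}+f_1\tfrac{\sqrt{2\pi}}{m^\prime\at{k_c}}$ term. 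From the definitions $Y_i\at{x}=\tfrac{\sqrt{2\pi}}{m^\prime\at{k_c}}(\cos\text{ or }\sin)(k_cx)\sgn\at{x}$, one sees $Y_1-\tfrac{\sqrt{2\pi}}{m^\prime\at{k_c}}\cos\at{k_c\cdot}$ vanishes for $x>0$ and equals $-2\tfrac{\sqrt{2\pi}}{m^\prime\at{k_c}}\cos\at{k_cx}$ for $x<0$, and similarly for the sine term. Hence as $x\to+\infty$ the oscillatory pieces of $S$ die and $S$ tends to a constant, giving $\lim_{x\to+\infty}\bat{R\at{x}-R_0\at{x}}=\sigma$; as $x\to-\infty$, $S$ differs from a combination $-2\tfrac{\sqrt{2\pi}}{m^\prime\at{k_c}}\bat{f_1\cos\at{k_c\cdot}+f_2\sin\at{k_c\cdot}}$ by the $\fspaceL^2$-function $Z$ plus a constant, and combining this with the known tail asymptotics of $R_0$ from Proposition \ref{Prop:ExistenceOfAnchor}(i) yields well-definedness of $\lim_{x\to-\infty}\bat{R\at{x}-\al_-\cos\at{k_cx}-\be_-\sin\at{k_cx}}$ with $\al_-,\be_-$ the appropriate linear combinations.

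Finally, for the sign estimates \eqref{Thm:FixedPoints.Eqn2}: since $S$ is $\delta$-admissible there are $x_\pm$ with $R\at{x_\pm}=\pm\delta$ and $R$ below $-\delta$ (resp.\ above $+\delta$) to the left of $x_-$ (resp.\ right of $x_+$); Step~1 of Lemma \ref{Lem:SelfMapping}, via the choice $x_\delta=\tfrac{2}{d_0}\delta$, shows $\babs{R\at{x}}\geq\delta$ already for $\abs{x}\geq x_\delta$, so \eqref{Thm:FixedPoints.Eqn2} holds with $C=2/d_0$, which depends only on $c_0,c_1$. The main obstacle is purely bookkeeping: keeping the decomposition in \eqref{Lem:LinSolOp.Eqn3} aligned with the span in the statement and verifying that the $\fspaceL^2$-remainder $Z$ (not just $S$) does not contribute to the oscillatory tail amplitudes — this is where one uses $\calA Z\in\fspaceL^2\at\Rset$ and the decay $\at{\calA^2Z}\at{x}\to0$ established inside the proof of Lemma \ref{Lem:LinSolOp}. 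Everything else is a direct quotation of the lemmas above.
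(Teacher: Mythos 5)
Your proposal is correct and follows essentially the same route as the paper: the fixed point of $\calT$ from Lemma \ref{Lem:SelfMapping} yields the travelling wave, the decomposition $S=Z+\la+f_1\bat{Y_1-\tfrac{\sqrt{2\pi}}{m^\prime\at{k_c}}\cos\at{k_c\cdot}}+f_2\bat{Y_2-\tfrac{\sqrt{2\pi}}{m^\prime\at{k_c}}\sin\at{k_c\cdot}}$ read off from \eqref{Lem:LinSolOp.Eqn3} gives the stated space membership and the tail asymptotics, and $\delta$-admissibility gives \eqref{Thm:FixedPoints.Eqn2}. Your explicit identification of the one-sided vanishing of $Y_i-\tfrac{\sqrt{2\pi}}{m^\prime\at{k_c}}(\cos\text{ or }\sin)(k_c\cdot)$ and of the decay of $Z$ via $\calA^2 Z$ merely spells out details the paper leaves implicit.
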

\begin{proof}
For given $c$ and $R_0$, Lemma \ref{Lem:SelfMapping} provides a unique fixed point $S\in{X}_\delta$
of $\calT$, which solves
\begin{align*}
\calM S = \calA^2\calG\at{S}+\eta
\end{align*}
for some $\eta\in\Rset$, and this implies that $R=R_0+S$ is in fact a travelling wave. Moreover, by construction --
see the proof of Lemma \ref{Lem:LinSolOp} -- we also have
\begin{align*}
S=Z+\la +f_1 \bat{ Y_1 - \frac{\sqrt{2\pi}}{m^\prime\at{k_c}} \cos\at{k_c\cdot}}+ f_2 \bat{Y_2 -  \frac{\sqrt{2\pi}}{m^\prime\at{k_c}} \sin\at{k_c\cdot}}
\end{align*}
for some constants $f_1$, $f_2$ and $\la$ and a function
$Z\in\fspaceL^2\at{\Rset}$ with $Z\at{x}\to0$ as $x\to\pm\infty$.
The claims on the asymptotic behaviour as $x\to\pm\infty$ now follow immediately since
$R_0$ has harmonic tails oscillations with wave number $k_c$. Finally,
the fixed point $S$ is $\delta$-admissible -- see the proof of Lemma \ref{Lem:SelfMapping} --
and this implies the validity of \eqref{Thm:FixedPoints.Eqn2} due to $0\leq x_+, -x_-\leq C\delta$.
\end{proof}
Notice that Proposition~\ref{Thm:Existence} yields a genuine
three-parameter family in the sense that
different choices of the parameters $c$ and $R_0$ correspond to
different tail oscillations for $x\to+\infty$ and hence to different waves $R=R_0+S$. 
This finishes the existence proof of Theorem~\ref{Thm:Main}.
\end{mhchange}
%

%
%
\section{Uniqueness of phase transition waves}\label{sect:uniqueness}
%
%
%
\begin{mhchange}
In this section we establish the uniqueness result of the Theorem~\ref{Thm:Main}
by showing
that the family provided by Proposition \ref{Thm:Uniqueness} contains all
phase transition waves that have harmonic tails oscillations for $x\to+\infty$ and
penetrate the spinodal region in a small interval only.
\begin{lemma}
\label{Lem:Uniqueness}
Let $\kappa>\tfrac12$ be given and suppose that $I_\delta=0$ for all $\delta$. Then
there exists $\delta_\kappa>0$ such that
the following statement holds for all $0<\delta<\delta_\kappa$: Let
$\pair{R_1}{\mu_1}$ and $\pair{R_2}{\mu_2}$ be two solutions to the travelling wave equation \eqref{Eqn:TW2} with speed $c\in\ccinterval{c_0}{c_1}$ such that
\begin{align*}
R_i\in\fspaceW^{2,\infty}\at{\Rset}\,,\qquad\mu_i\in\Rset\,,\qquad
R_i\at{0}=0
\end{align*}
and
\begin{align*}
R_i\at{x}\leq -\delta\quad{\text{for}}\quad x\leq-\delta^\kappa\,,\qquad
R_i\at{x}\geq +\delta\quad{\text{for}}\quad x\geq+\delta^\kappa
\end{align*}
for both $i=1$ and $i=2$. Then, $R_1$ and $R_2$ are either identical or satisfy
\begin{align*}
 R_1\at{x}-R_2\at{x}-\alpha_+\bat{\cos\at{k_cx}-1}-\beta_+\sin\at{k_cx}-\ga_+\quad\xrightarrow{\;x\to+\infty\;}\quad0
\end{align*}
for some constants $\ga_+$ and $\pair{\alpha_+}{\beta_+}\neq\pair{0}{0}$.
\end{lemma}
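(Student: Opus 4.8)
The plan is to work with the difference $S:=R_1-R_2$ together with $\eta:=\mu_1-\mu_2$. Subtracting the two instances of \eqref{Eqn:TW2} gives $\calM S=\calA^2 G+\eta$ with $G:=\Psi_\delta^\prime\at{R_1}-\Psi_\delta^\prime\at{R_2}$. The decisive elementary observation is that $G$ lives on a tiny interval: since $\Psi_\delta^\prime$ coincides with $\Psi_0^\prime=\sgn$ outside $\oointerval{-\delta}{\delta}$ and the hypotheses give $\babs{R_i\at x}\geq\delta$ for $\abs x\geq\delta^\kappa$, we have $G\at x=0$ whenever $\abs x\geq\delta^\kappa$, while $\norm{G}_\infty\leq2C_\Psi$ by Assumption \ref{Ass:DPsi}. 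Thus for $\delta$ small, $G\in\fspace{Y}$ and $Q:=\calA^2 G$ is bounded, continuous and supported in $\ccinterval{-2}{2}$, so $\widehat Q\in\fspaceL^2\at\Rset\cap\fspace{BC}^1\at\Rset$ exactly as in the proof of Lemma \ref{Lem:LinSolOp}.

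Next I would read off the tail structure of $S$ at $+\infty$. Since $\calM 1=1-c^2\neq0$, the function $F:=S-\eta/(1-c^2)$ satisfies $\calM F=Q$, so Lemma \ref{Lem:M.Inv} and the kernel description in Remark \ref{Rem:MProps} give
\begin{align*}
S=\frac{\eta}{1-c^2}+Z+f_1Y_1+f_2Y_2+a\cos\at{k_c\cdot}+b\sin\at{k_c\cdot}
\end{align*}
for some $Z\in\fspaceL^2\at\Rset$ and constants $f_1,f_2,a,b$. The compact support of $\calM Y_i$ and of $Q$ together with $\calA Z\in\fspaceL^2\at\Rset$ imply, as in the proof of Lemma \ref{Lem:LinSolOp}, that $Z\at x\to0$ as $x\to\pm\infty$. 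Using the explicit formulas for $Y_1,Y_2$ and $\sgn\at x=1$ for large $x>0$ one then obtains constants $\alpha_+,\beta_+,\gamma_+$ with
\begin{align*}
R_1\at x-R_2\at x-\alpha_+\bat{\cos\at{k_cx}-1}-\beta_+\sin\at{k_cx}-\gamma_+\quad\xrightarrow{\;x\to+\infty\;}\quad0\,.
\end{align*}

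The dichotomy is then immediate. If $\pair{\alpha_+}{\beta_+}\neq\pair{0}{0}$ we are in the second alternative of the lemma and done. Otherwise $\lim_{x\to+\infty}S\at x$ exists, which together with $S\at0=0$ and $S\in\fspaceW^{2,\infty}\at\Rset$ shows $S\in\fspace{X}$; since $G\in\fspace{Y}$, the uniqueness part of Lemma \ref{Lem:LinSolOp} identifies $\pair S\eta$ with the pair constructed there and hence gives $\norm{S^\prime}_\infty\leq C_{\calM}\norm{\calA G}_\infty$. On $\supp G\subseteq\ccinterval{-\delta^\kappa}{\delta^\kappa}$ the mean value theorem, Assumption \ref{Ass:DPsi} and $\abs{S\at x}=\abs{S\at x-S\at0}\leq\delta^\kappa\norm{S^\prime}_\infty$ give
\begin{align*}
\babs{G\at x}\leq\norm{\Psi_\delta^{\prime\prime}}_\infty\babs{R_1\at x-R_2\at x}\leq\tfrac{C_\Psi}{\delta}\,\delta^\kappa\norm{S^\prime}_\infty\,,
\end{align*}
whence $\norm{\calA G}_\infty\leq\babs{\supp G}\,\norm G_\infty\leq2C_\Psi\,\delta^{2\kappa-1}\norm{S^\prime}_\infty$ and therefore $\norm{S^\prime}_\infty\leq2C_{\calM}C_\Psi\,\delta^{2\kappa-1}\norm{S^\prime}_\infty$. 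Since $2\kappa-1>0$, choosing $\delta_\kappa\leq1$ small enough that $2C_{\calM}C_\Psi\,\delta_\kappa^{2\kappa-1}<1$ forces $S^\prime\equiv0$, hence $S\equiv S\at0=0$ and $R_1\equiv R_2$.

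The main obstacle is the middle step. Since $0$ sits in the continuous spectrum of $\calM$, one cannot simply invert $\calM$ to extract the behaviour of $S$ at infinity; it is the tailored inversion formula of Lemma \ref{Lem:M.Inv}, and in particular the decay $Z\at x\to0$ of its $\fspaceL^2$-part, that makes the decomposition -- and hence the dichotomy -- rigorous. The reduction of the ``identical'' branch to a genuine contraction is then quantitative and rests precisely on the gain $\delta^{2\kappa-1}\to0$, which is why the hypothesis $\kappa>\tfrac12$ is needed; note that the argument uses only the structure of $\calM$ and is insensitive to the choice of $R_0$.
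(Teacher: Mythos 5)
Your argument is correct and follows essentially the same route as the paper: subtract the two instances of \eqref{Eqn:TW2}, use the spinodal hypothesis to get $\supp G\subseteq\ccinterval{-\delta^\kappa}{\delta^\kappa}$ with $\norm{G}_\infty\leq C\delta^{\kappa-1}\norm{S^\prime}_\infty$, invert $\calM$ modulo its bounded kernel $\xspan\{\cos\at{k_c\cdot},\sin\at{k_c\cdot}\}$ to obtain the tail dichotomy at $+\infty$, and close the ``identical'' branch through the gain $\delta^{2\kappa-1}$ combined with $S\at{0}=0$. The only difference is one of bookkeeping: the paper first applies Lemma \ref{Lem:LinSolOp} to produce an auxiliary corrector in $\fspace{X}$ and then compares $R_2-R_1$ with it via the kernel, whereas you decompose $R_1-R_2$ directly through Lemma \ref{Lem:M.Inv} and invoke the uniqueness clause of Lemma \ref{Lem:LinSolOp} only on the non-oscillatory branch.
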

\begin{proof}
For given $R_1$, $R_2$, there exist constant $\mu_1,\mu_2\in\Rset$ such that
\begin{align*}
\calM\at{R_2-R_1}=\calA^2 G+\mu_2-\mu_1\,,\qquad
G:=\Psi_\delta^\prime\at{R_2}-\Psi_\delta^\prime\at{R_1}\,.
\end{align*}
By assumption and due to the bounds of $\Psi^{\prime\prime}_\delta$ we also
find $G\at{x}=0$ for $\abs{x}\geq \delta^\kappa$ as well as
\begin{align*}
\babs{G\at{x}}\leq\frac{C}{\delta}\abs{R_2\at{x}-R_1\at{x}}\leq C\delta^{\kappa-1}
\norm{R_2^\prime-R_1^\prime}_{\infty}\quad\text{for}\quad \abs{x}\leq \delta^\kappa\,,
\end{align*}
and this implies
\begin{align*}
\norm{\calA G}_\infty\leq \abs{\supp G}\norm{G}_\infty\leq C\delta^{2\kappa-1}\norm{R_2^\prime-R_1^\prime}_{\infty}\,.
\end{align*}
Moreover, Lemma \ref{Lem:LinSolOp} provides $S\in\fspace{X}$ as well as $\eta\in\Rset$ such that
\begin{align*}
\calM{S}=\calA^2G+\eta\,,\qquad
\norm{S^\prime}_\infty\leq {C}\delta^{2\kappa-1}
\norm{R_2^\prime-R_1^\prime}_{\infty}.
\end{align*}
In particular, we have
\begin{align*}
\calA^2G=\calM\bat{R_2-R_1-\at{1-c^2}^{-1}\at{\mu_2-\mu_1}}=
\calM\at{S-\at{1-c^2}^{-1}\eta}.
\end{align*}
Since the space of bounded kernel functions for $\calM$ is spanned by $\sin\at{k_c\cdot}$ and $\cos\at{k_c\cdot}$, we  conclude that there exist constants $\alpha_+$ and $\beta_+$ such that
\begin{align*}
R_2\at{x}-R_1\at{x}=S\at{x}-\sigma+\alpha_+ \at{1-\cos\at{k_c{x}}}+\beta_+ \sin\at{k_c{x}}+\gamma_+\,,
\end{align*}
where $\sigma:=\lim_{x\to+\infty}S\at{x}$ and $\gamma_+:=\at{1-c^2}^{-1}\at{\mu_2-\mu_1-\eta}+\sigma-\alpha_+$.
In case of $\alpha_+=\beta_+=0$ we therefore find
\begin{align*}
\norm{R_2^\prime-R_1^\prime}_\infty=\norm{S^\prime}_\infty\leq
C\delta^{2\kappa-1}
\norm{R_2^\prime-R_1^\prime}_{\infty}\,,
\end{align*}
and combining this with $R_1\at{0}=R_2\at{0}$ we get $R_2=R_1$ for all  sufficiently small $\delta$.
\end{proof}
\begin{proposition}
\label{Thm:Uniqueness}
Suppose that $I_\delta=0$ for all $\delta$ and that $\kappa$ with $\tfrac{1}{2}<\kappa<1$ is fixed.
Then there exists $\delta_\kappa$ with $0<\delta_\kappa\leq\delta_0$
such that the following statement holds for all $0<\delta<\delta_\kappa$: Let $R$ be a travelling waves with speed $c\in\ccinterval{c_0}{c_1}$ such that
the limit
\begin{align*}
\lim\limits_{x\to+\infty}\bat{R\at{x}-R_0\at{x}}
\end{align*}
is well-defined for some $R_0$ from Proposition \ref{Prop:ExistenceOfAnchor} and
such that
\begin{align*}
R\at{x}\leq -\delta\quad{\text{for}}\quad x\leq-\delta^\kappa\,,\qquad
R\at{x}\geq +\delta\quad{\text{for}}\quad x\geq+\delta^\kappa.
\end{align*}
Then $R$ belongs to the family of waves provided by Proposition \ref{Thm:Existence}.
\end{proposition}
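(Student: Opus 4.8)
The plan is to obtain Proposition~\ref{Thm:Uniqueness} as a short consequence of the comparison Lemma~\ref{Lem:Uniqueness} and the existence statement Proposition~\ref{Thm:Existence}. Fix the speed $c$ and the anchor $R_0$ from the hypothesis, and let $\tilde R=R_0+\tilde S$ be the travelling wave that Proposition~\ref{Thm:Existence} produces for the pair $\pair c{R_0}$. We shall use three properties of $\tilde R$: it is normalised by $\tilde R\at0=0$; the limit $\lim_{x\to+\infty}\bat{\tilde R\at x-R_0\at x}$ exists; and it satisfies the interface bounds \eqref{Thm:FixedPoints.Eqn2}, namely $\tilde R\at x\le-\delta$ for $x\le-C\delta$ and $\tilde R\at x\ge+\delta$ for $x\ge+C\delta$. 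The idea is to feed $R$ and $\tilde R$ into Lemma~\ref{Lem:Uniqueness} and then to eliminate the oscillatory alternative in its conclusion, which leaves $R=\tilde R$ and hence shows that $R$ lies in the family of Proposition~\ref{Thm:Existence}.

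First I would verify the hypotheses of Lemma~\ref{Lem:Uniqueness} for the pair $\pair R{\tilde R}$ with the given $\kappa>\tfrac12$. Both waves have speed $c\in\ccinterval{c_0}{c_1}$ and vanish at $0$ (for $R$ this is the normalisation inherited from the setup of Theorem~\ref{Thm:Main}); $R$ obeys the required sign conditions on $\{\abs x\ge\delta^\kappa\}$ by assumption. For $\tilde R$ one only knows the coarser bounds \eqref{Thm:FixedPoints.Eqn2}, whose interface is of width $\DO\delta$; but since $\kappa<1$ we have $\delta^\kappa\ge C\delta$ once $\delta$ is small, so $\{x\le-\delta^\kappa\}\subseteq\{x\le-C\delta\}$ and likewise on the right, and the sign conditions for $\tilde R$ follow. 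This is exactly the place where the restriction $\kappa<1$ of Proposition~\ref{Thm:Uniqueness} — stronger than the $\kappa>\tfrac12$ needed in the lemma — enters. It then remains to fix $\delta_\kappa\le\delta_0$ small enough to absorb the threshold of Lemma~\ref{Lem:Uniqueness}, the existence threshold $\delta_0$, and the inequality $\delta^{\kappa-1}\ge C$.

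Lemma~\ref{Lem:Uniqueness} now gives the dichotomy: either $R=\tilde R$, in which case $R$ is a member of the family of Proposition~\ref{Thm:Existence} and we are done, or there are constants $\gamma_+$ and $\pair{\alpha_+}{\beta_+}\neq\pair00$ with
\begin{align*}
R\at x-\tilde R\at x-\alpha_+\bat{\cos\at{k_cx}-1}-\beta_+\sin\at{k_cx}-\gamma_+\;\xrightarrow{\;x\to+\infty\;}\;0.
\end{align*}
To rule out the second case I would use the non-oscillatory behaviour at $+\infty$ on both sides: $R\at x-R_0\at x$ converges as $x\to+\infty$ by hypothesis, and $\tilde R\at x-R_0\at x$ converges by Proposition~\ref{Thm:Existence}, hence $R\at x-\tilde R\at x$ has a finite limit as $x\to+\infty$. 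Since a non-trivial combination $\alpha_+\bat{\cos\at{k_cx}-1}+\beta_+\sin\at{k_cx}$ does not converge, the displayed relation forces $\pair{\alpha_+}{\beta_+}=\pair00$, and therefore $R=\tilde R$. This proves the proposition.

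\textbf{Where the difficulty lies.} There is not much of a genuine obstacle at this level: the real work sits inside Lemma~\ref{Lem:Uniqueness}, where the inversion Lemma~\ref{Lem:LinSolOp} is applied to $\calM\at{R_2-R_1}=\calA^2G+\mathrm{const}$ with $\supp G\subseteq\ccinterval{-\delta^\kappa}{\delta^\kappa}$, producing the contraction-type estimate $\norm{S'}_\infty\le C\delta^{2\kappa-1}\norm{R_2'-R_1'}_\infty$ whose positive exponent $2\kappa-1$ closes the argument. The only points in the proof of Proposition~\ref{Thm:Uniqueness} proper that need care are the two just flagged — reconciling the $\DO\delta$-wide interface of the existence-theoretic wave $\tilde R$ with the $\delta^\kappa$-localisation demanded by the comparison lemma (which is precisely why one needs $\kappa<1$), and threading a single common $\delta_\kappa$ through all the quoted results.
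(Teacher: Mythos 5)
Your argument is correct and coincides with the paper's own proof: both compare $R$ with the fixed-point wave $R_0+S$ from Proposition~\ref{Thm:Existence} via Lemma~\ref{Lem:Uniqueness}, use $C\delta\leq\delta^\kappa$ (i.e.\ $\kappa<1$) to check the sign conditions for that wave, and exclude the oscillatory alternative because $R-R_0-S$ converges as $x\to+\infty$. Your write-up merely makes explicit the step, left implicit in the paper, that a nontrivial combination of $\cos\at{k_c\cdot}$ and $\sin\at{k_c\cdot}$ cannot converge.
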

\begin{proof}
Let $R_0+S$ be the travelling wave from Proposition \ref{Thm:Existence}. By construction, $R-R_0-S$
converges as $x\to+\infty$ and
for all sufficiently small $\delta$ we also
have $C\delta\leq\delta^\kappa$. Lemma~\ref{Lem:Uniqueness} applied with
$R_1=R$ and $R_2=R_0+S$ therefore implies $R=R_0+S$.
\end{proof}
With Proposition~\ref{Thm:Existence} and Proposition~\ref{Thm:Uniqueness} we have established our existence and uniqueness result in the special case that $I_\delta=0$ holds for for all $\delta$. The corresponding result for
the general case is then provided by Lemma \ref{Lem:Trafo}.
\par
We finally mention a particular consequence of our uniqueness result, namely that the
family from Proposition \ref{Thm:Existence} does not depend on the particular choice of the solution
operator $\calL$ to the affine problem \eqref{FPP:CorrEqn}$_1$. At a first glance, this might
be surprising since the operator $\calT$ and hence each fixed point surely depend on $\calL$.
We can, however, argue as follows. Suppose we would choose in the proof of Lemma \ref{Lem:LinSolOp} another
reasonable solution operator $\bar{\calL}$ (for instance, the operator
from \eqref{Eqn:AlternativeL} that does not involve any kernel function of $\calM$).
Repeating all arguments from Section~\ref{sect:proof} we then find -- for any given $\delta$, $c$, and $R_0$ -- a different corrector
$\bar{S}\in\fspaceW^{2,\infty}\at\Rset$. In general, this corrector $\bar{S}$ does not
converge as $x\to+\infty$ but satisfies
\begin{align*}
\bar{S}\at{0}=0\,, \qquad
\norm{\bar{S}}_\infty\leq\bar{C}\delta^2\,,\qquad
\norm{\bar{S}^\prime}_\infty\leq\bar{C}\delta\,,\qquad
\norm{\bar{S}^{\prime\prime}}_\infty\leq\bar{C}
\end{align*}
for some constant $\bar{C}$ that is independent of $c$, $R_0$, and $\delta$.
Moreover, we also have
\begin{align*}
\bar{S}\in\fspaceL^2\at\Rset\oplus \xspan\big\{1,\,Y_1,\,Y_2,\,\cos\at{k_c\cdot},\,\sin\at{k_c\cdot}\big\}
\end{align*}
that means the tail oscillations of $\bar{S}$ for both $x\to-\infty$ and $x\to+\infty$
are again harmonic waves with wave number $k_c$. Adding
a suitable linear combination of $1-\cos\at{k_c\cdot}$ and $\sin\at{k_c\cdot}$ to $R_0$
we can construct another wave $\bar{R}_0$ such that
$\bar{R}_0$ and $R_0+\bar{S}$ have the same tails oscillations as $x\to+\infty$.
This function $\bar{R}_0$ is, at least for small $\delta$, also a travelling wave for
the unperturbed problem and hence among the family of waves provided by
Proposition \ref{Prop:ExistenceOfAnchor}. We can therefore use $\bar{R}_0$ instead of
$R_0$ in order to define the operator $\calG$.
Theorem \ref{Lem:LinSolOp}, which relies on the oscillation-preserving operator $\calL$, then
provides a corrector $S$ that converges as $x\to+\infty$, and
from Lemma \ref{Lem:Uniqueness} we finally infer that $\bar{R}_0+S=R_0+\bar{S}$
because both waves have, by construction, the same tail oscillations for $x\to+\infty$.
We therefore conclude, at least for small $\delta$,
that changing $\calL$ does not alter
the family of travelling waves but only its
parametrisation by $R_0$.
\end{mhchange}
%
%
\section{Kinetic relations}\label{sect:kinrel}
%
We finally show that the kinetic relation does not change to order $\nDO{\delta}$. To this end we denote
by $R_\delta$ a travelling wave solution to \eqref{eq:strain} as provided by Theorem \ref{Thm:Main}. The corresponding \emph{configurational force}, cf. \cite{Herrmann:10b},
is then defined by $\Upsilon_\delta := \Upsilon_{\mathrm{e},\delta}-\Upsilon_{\mathrm{f},\delta}$ with
\begin{align*}
\Upsilon_{\mathrm{e},\delta}:=\Phi_\delta\at{\bar{r}_{\delta,+}}-\Phi_\delta\at{\bar{r}_{\delta,-}}\,,\qquad
\Upsilon_{\mathrm{f},\delta}:=
\frac{\Phi^\prime_\delta\at{\bar{r}_{\delta,+}}+
\Phi^\prime_\delta\at{\bar{r}_{\delta,-}}}{2}\Bat{\bar{r}_{\delta,+}-\bar{r}_{\delta,-}}\,,
\end{align*}
where the \emph{macroscopic} strains $\bar{r}_{\delta,\pm}$ on both sides of the interface can
be computed from $R_\delta$ via
\begin{align*}
\bar{r}_{\delta,\pm}=\lim_{L\to\infty}\frac{1}{L}\int_0^{+L}{R_\delta\at{\pm x}}\dint{x}\,.
\end{align*}
\begin{lemma}
Let $R_\delta$ be a travelling wave from Theorem \ref{Thm:Main}, and $R_0$ the corresponding wave for $\delta=0$. Then we have
$\Upsilon_{\delta}=\Upsilon_0+\nDO{\delta^2}$.
\end{lemma}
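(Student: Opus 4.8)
The plan is to expand each term of $\Upsilon_\delta$ around $\delta=0$ using the information from Theorem~\ref{Thm:Main}, namely that $R_\delta = R_0 - I_\delta + S$ with $I_\delta=\DO\delta$ and $\norm{S}_\infty = \nDo{\delta^2}$, together with $\abs{I_\delta}\leq C_\Psi\delta$ and the fact that $\Phi_\delta$ differs from $\Phi_0$ only on $\oointerval{-\delta}{\delta}$. First I would compute the macroscopic strains: since $S\at x$ converges as $x\to+\infty$ and $R_0$ has harmonic tail oscillations of zero mean about $\bar r^+_c$ on the right, the Cesàro average gives $\bar r_{\delta,+} = \bar r^+_c - I_\delta + \sigma$ where $\sigma=\lim_{x\to+\infty}S\at x = \nDo{\delta^2}$; on the left, the oscillatory part $\alpha^-_c\bat{\cos\at{k_c\cdot}-1}+\beta^-_c\sin\at{k_c\cdot}$ contributes $-\alpha^-_c$ to the average, so $\bar r_{\delta,-} = \bar r^-_c - \alpha^-_c - I_\delta + a_- (\text{left average of }R_0\at{\cdot+d_-}) + \nDo{\delta^2}$ — in any case $\bar r_{\delta,\pm} = \bar r_{0,\pm} - I_\delta + \nDo{\delta^2}$. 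The key point is that the two macroscopic strains are shifted by the \emph{same} amount $-I_\delta$ to leading order, so their difference satisfies $\bar r_{\delta,+}-\bar r_{\delta,-} = \bar r_{0,+}-\bar r_{0,-} + \nDo{\delta^2}$.

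Next I would handle $\Upsilon_{\mathrm e,\delta} = \Phi_\delta\at{\bar r_{\delta,+}} - \Phi_\delta\at{\bar r_{\delta,-}}$. Since $\abs{\bar r_{0,\pm}}$ are bounded away from $0$ (the tails sit in the two wells, away from the spinodal region), for small $\delta$ both $\bar r_{\delta,\pm}$ lie outside $\oointerval{-\delta}{\delta}$, so $\Phi_\delta$ agrees with $\Phi_0$ in a neighbourhood of each and we may replace $\Phi_\delta$ by $\Phi_0$ there, incurring no error. Then a first-order Taylor expansion of $\Phi_0$ at $\bar r_{0,\pm}$ gives
\begin{align*}
\Phi_\delta\at{\bar r_{\delta,\pm}} = \Phi_0\at{\bar r_{0,\pm}} - \Phi_0^\prime\at{\bar r_{0,\pm}}\,I_\delta + \DO{I_\delta^2} + \nDo{\delta^2}\,,
\end{align*}
so that $\Upsilon_{\mathrm e,\delta} = \Upsilon_{\mathrm e,0} - \bat{\Phi_0^\prime\at{\bar r_{0,+}} - \Phi_0^\prime\at{\bar r_{0,-}}}I_\delta + \nDO{\delta^2}$. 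Here I use $\DO{I_\delta^2}=\nDO{\delta^2}$, which requires $I_\delta = \nDo\delta$; this is exactly the surprisingly simple leading-order effect noted in the seventh remark after the Main Theorem, and it should follow from $\Psi_\delta'$ being, at leading order, an odd perturbation of $\sgn$ so that $I_\delta = \tfrac12\int_{-\delta}^\delta\Psi_\delta'\at r\dint r$ is $o(\delta)$ — alternatively one simply reads off from Lemma~\ref{Lem:Trafo} that after the transformation one is always in the case $I_\delta=0$, in which the $I_\delta$-correction vanishes identically.

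For the driving term, expand $\Upsilon_{\mathrm f,\delta} = \tfrac12\bat{\Phi_\delta^\prime\at{\bar r_{\delta,+}} + \Phi_\delta^\prime\at{\bar r_{\delta,-}}}\bat{\bar r_{\delta,+}-\bar r_{\delta,-}}$ the same way: again $\Phi_\delta^\prime = \Phi_0^\prime$ near $\bar r_{0,\pm}$, so $\Phi_\delta^\prime\at{\bar r_{\delta,\pm}} = \Phi_0^\prime\at{\bar r_{0,\pm}} - \Phi_0^{\prime\prime}\at{\bar r_{0,\pm}}I_\delta + \nDo\delta = \Phi_0^\prime\at{\bar r_{0,\pm}} - I_\delta + \nDo\delta$ using $\Phi_0^{\prime\prime}\equiv 1$ away from $0$; and $\bar r_{\delta,+}-\bar r_{\delta,-} = \bar r_{0,+}-\bar r_{0,-}+\nDo{\delta^2}$ from the first step. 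Multiplying out, $\Upsilon_{\mathrm f,\delta} = \Upsilon_{\mathrm f,0} - \tfrac12\bat{\Phi_0^\prime\at{\bar r_{0,+}}+\Phi_0^\prime\at{\bar r_{0,-}}}\cdot 0 \cdot(\cdots) - I_\delta\bat{\bar r_{0,+}-\bar r_{0,-}} + \nDO{\delta^2}$; since $\Phi_0'\at r = r - \sgn r$ and $\bar r_{0,\pm}$ differ from $\pm 1$ by $\DO{1-c^2}$-bounded amounts one checks $\Phi_0^\prime\at{\bar r_{0,+}}-\Phi_0^\prime\at{\bar r_{0,-}} = \bar r_{0,+}-\bar r_{0,-} - 2$, and combining the two expansions the $I_\delta$-linear terms in $\Upsilon_\delta = \Upsilon_{\mathrm e,\delta}-\Upsilon_{\mathrm f,\delta}$ cancel, leaving $\Upsilon_\delta = \Upsilon_0 + \nDO{\delta^2}$.

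The main obstacle is bookkeeping the asymptotic averages rigorously — specifically justifying that the Cesàro limits defining $\bar r_{\delta,\pm}$ exist and equal $\bar r_{0,\pm} - I_\delta$ up to $\nDo{\delta^2}$, which hinges on $S$ having a genuine limit at $+\infty$ and on $R_0$'s left tail being a harmonic oscillation of controlled mean; and the companion algebraic point that the $\DO{I_\delta}$ contributions to $\Upsilon_{\mathrm e,\delta}$ and $\Upsilon_{\mathrm f,\delta}$ exactly cancel (or, cleaner, invoking Lemma~\ref{Lem:Trafo} to reduce to $I_\delta=0$ from the outset so no such cancellation is needed). Once these are in hand the rest is Taylor expansion using $\Phi_0^{\prime\prime}\equiv1$ off the origin.
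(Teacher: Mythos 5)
Your computation of the macroscopic strains, $\bar r_{\delta,\pm}=\bar r_{0,\pm}-I_\delta+\nDO{\delta^2}$, and your expansion of $\Upsilon_{\mathrm{f},\delta}=\Upsilon_{\mathrm{f},0}-I_\delta\bat{\bar r_{0,+}-\bar r_{0,-}}+\nDO{\delta^2}$ match the paper. The genuine gap is the step where you replace $\Phi_\delta$ by $\Phi_0$ near $\bar r_{\delta,\pm}$ ``incurring no error''. Assumption \ref{Ass:DPsi} only gives $\Phi_\delta^\prime=\Phi_0^\prime$ outside $\oointerval{-\delta}{\delta}$; the potentials themselves differ by \emph{constants} on either side of the spinodal region, and these two constants differ across it by exactly $2I_\delta$ (this is the identity $I_\delta=-\tfrac12\bat{\Phi_\delta\at{+1}-\Phi_\delta\at{-1}}$ noted after Assumption \ref{Ass:DPsi}: the perturbation tilts the relative depth of the two wells). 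Consequently the correct expansion is $\Upsilon_{\mathrm{e},\delta}=\Phi_0\at{\bar r_{\delta,+}}-\Phi_0\at{\bar r_{\delta,-}}-2I_\delta=\Upsilon_{\mathrm{e},0}-I_\delta\bat{\bar r_{0,+}-\bar r_{0,-}}+\nDO{\delta^2}$, which is what the paper computes (by writing $\Upsilon_{\mathrm{e},\delta}$ as $\int_{\bar r_{\delta,-}}^{\bar r_{\delta,+}}\Phi_\delta^\prime\at{r}\dint{r}$ and splitting off $\Psi_\delta^\prime-\Psi_0^\prime$), whereas yours gives $\Upsilon_{\mathrm{e},0}-I_\delta\bat{\bar r_{0,+}-\bar r_{0,-}-2}+\nDO{\delta^2}$. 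With your expansions the $I_\delta$-linear terms therefore do \emph{not} cancel in $\Upsilon_\delta=\Upsilon_{\mathrm{e},\delta}-\Upsilon_{\mathrm{f},\delta}$: you would be left with $\Upsilon_\delta=\Upsilon_0+2I_\delta+\nDO{\delta^2}$, i.e.\ only an $\nDO{\delta}$ statement when $I_\delta\neq0$. The cancellation you assert at the end is exactly carried by the $-2I_\delta$ well-depth term you dropped.

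Your fallback --- reduce to $I_\delta=0$ via Lemma \ref{Lem:Trafo} --- could repair this, but not as a one-liner: the transformation replaces $\Psi_\delta^\prime\at{r}$ by $\Psi_\delta^\prime\at{r-I_\delta}$, so the new potential is not a translate of $\Phi_\delta$ but differs from $\Phi_\delta\at{\cdot-I_\delta}$ additionally by the affine term $I_\delta\,r$ (up to a constant), while the macroscopic strains shift by $+I_\delta$. One must verify that the configurational force is invariant under this combined change; it is, because $\Upsilon_{\mathrm{e}}$ and $\Upsilon_{\mathrm{f}}$ pick up the same extra term $I_\delta\bat{\bar r_{\delta,+}-\bar r_{\delta,-}}$ --- but that verification is precisely the cancellation you skipped, in a different guise, and Lemma \ref{Lem:Trafo} alone does not provide it. Two smaller slips: Theorem \ref{Thm:Main} gives $\norm{S}_\infty=\nDO{\delta^2}$, not $\nDo{\delta^2}$; and $\DO{I_\delta^2}=\nDO{\delta^2}$ needs only $\abs{I_\delta}\leq C_\Psi\delta$, not $I_\delta=\nDo{\delta}$ --- the latter does not follow from the assumptions (no oddness of the perturbation is assumed), and fortunately is not needed anywhere.
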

\begin{proof}
By construction, we know that the only asymptotic contributions to the profile $R_\delta$
are due to $R_0-I_\delta$ plus a small asymptotic corrector of order $\nDO{\delta^2}$
from $\xspan\big\{1,\,Y_1,\,Y_2\big\}$. This implies
\begin{align*}
\bar{r}_{\delta,\pm}=\bar{r}_{0,\pm}-I_\delta + \nDO{\delta^2}\,.
\end{align*}
As $\bar{r}_{0,\pm}$ and $\bar{r}_{\delta,\pm}$ are both larger than $\delta$ we know that
\begin{align*}
\Psi^\prime_\delta\at{\bar{r}_{\delta,\pm}}=\mp1=\Psi^\prime_0\at{\bar{r}_{\delta,\pm}}\,.
\end{align*}
Thus, we conclude
\begin{align*}
\qquad \Phi^\prime_\delta\at{\bar{r}_{\delta,\pm}}=\bar{r}_{\delta,\pm}\mp1=\Phi^\prime_0\at{\bar{r}_{0,\pm}}-I_\delta+\nDO{\delta^2}\,,
\end{align*}
and hence
\begin{align*}
\Upsilon_{\mathrm{f},\delta}=\Upsilon_{\mathrm{f},0}-I_\delta\at{\bar{r}_{0,+}-\bar{r}_{0,-}}+\nDO{\delta^2}\,.
\end{align*}
Moreover, we calculate
\begin{align*}
\Upsilon_{\mathrm{e},\delta}
&=
\int_{\bar{r}_{\delta,-}}^{\bar{r}_{\delta,+}}\Phi^\prime_\delta\at{r}\dint{r}
=
\int_{\bar{r}_{\delta,-}}^{\bar{r}_{\delta,+}}\left(r-\Psi^\prime_\delta\at{r}
-\Psi^\prime_0\at{r}+\Psi^\prime_0\at{r}\right)\dint{r}
\\&
=
\int_{\bar{r}_{\delta,-}}^{\bar{r}_{\delta,+}}\Phi^\prime_0\at{r}\dint{r}
-\int_{\bar{r}_{\delta,-}}^{\bar{r}_{\delta,+}}\left(\Psi^\prime_\delta\at{r}-\Psi^\prime_0\at{r}\right)\dint{r}
\\&
=
\Phi_0\at{\bar{r}_{\delta,+}}-\Phi_0\at{\bar{r}_{\delta,-}}
-2I_\delta
=
\tfrac12\at{\bar{r}_{\delta,+}-1}^2-\tfrac12\at{\bar{r}_{\delta,-}+1}^2
-2I_\delta
\\&
=
\tfrac12\at{\bar{r}_{0,+}-I_\delta-1}^2-\tfrac12\at{\bar{r}_{0,-}-I_\delta+1}^2 - 2I_\delta +\nDO{\delta^2}
\\&
=
\tfrac12\at{\bar{r}_{0,+}-1}^2-\tfrac12\at{\bar{r}_{0,-}+1}^2
-I_\delta\at{\bar{r}_{0,+}-1-\bar{r}_{0,-}-1}
-2I_\delta+\nDO{\delta^2}
\\&=
\Upsilon_{\mathrm{e},0}-I_\delta\at{\bar{r}_{0,+}-\bar{r}_{0,-}}+\nDO{\delta^2}\,.
\end{align*}
Subtracting both results gives $\Upsilon_\delta = \Upsilon_0 +\nDO{\delta^2}$,
the desired result.
\end{proof}
%
%
\section*{Acknowledgement}
%
The authors gratefully acknowledge financial support by the EPSRC (EP/H05023X/1).
%
%
%
%
\def\cprime{$'$} \def\cprime{$'$} \def\cprime{$'$}   \def\polhk#1{\setbox0=\hbox{#1}{\ooalign{\hidewidth   \lower1.5ex\hbox{`}\hidewidth\crcr\unhbox0}}} \def\cprime{$'$}   \def\cprime{$'$}
%

%
%

\end{document}